\newcommand{\RomanNumeralCaps}[1] 
    {\MakeUppercase{\romannumeral #1}} 
\newtheorem{theorem}{Theorem}
\newtheorem{definition}{Definition}
\newtheorem{proposition}{Proposition}
\newtheorem{assumption}{Assumption}
\newtheorem{lemma}{Lemma}
\newtheorem{corollary}{Corollary}
\newtheorem{remark}{Remark}
\newenvironment{proof}
     {\textit{Proof:}}{\hfill $\blacksquare$\\}
\let\c@author\relax
\newcommand{\Exp}[2]{\ensuremath{\mathop{\mathbb{E}}_{\substack{{#1}}} \left[{#2}\right]}} 
\newcommand{\tr}[1]{{#1}^{\ensuremath{\mathsf{T}}}} 
\newcommand{\inv}[1]{{#1}^{\ensuremath{\mathsf{-1}}}} 
\newcommand{\sxx}[1]{{\color{red}#1\ }}  
\newcommand{\todo}[1]{{\color{orange}#1\ }}  
\newcommand{\nxx}[1]{{#1\ }} 
\title{On the Feedback Law in Stochastic Optimal Nonlinear Control}
\author{Mohamed Naveed Gul Mohamed, Suman Chakravorty, Raman Goyal, and Ran Wang
\thanks{The authors are with the Department of Aerospace Engineering, Texas A\&M University, College Station, TX 77843 USA. \{\tt naveed, schakrav, ramaniitrgoyal92, rwang0417\}@tamu.edu
}}
\begin{document}
\maketitle




\begin{abstract}
We consider the problem of nonlinear stochastic optimal control. This problem is thought to be fundamentally intractable owing to Bellman's ``curse of dimensionality". 
We present a result that shows that repeatedly solving an open-loop deterministic problem from the current state with progressively shorter horizons, similar to Model Predictive Control (MPC), results in a feedback policy that is $O(\epsilon^4)$ near to the true global stochastic optimal policy, where $\epsilon$ is a perturbation parameter modulating the noise. We also show that the optimal deterministic feedback problem has a perturbation structure such that higher-order terms of the feedback law do not affect lower-order terms and that this structure is lost in the optimal stochastic feedback problem. Consequently, solving the Stochastic Dynamic Programming problem is highly susceptible to noise, even in low dimensional problems, and in practice, the MPC-type feedback law offers superior performance even for high noise levels. 
\end{abstract}
\begin{IEEEkeywords}
Stochastic Optimal Control, Nonlinear Systems, Model Predictive Control.
\end{IEEEkeywords}
\section{\uppercase{Introduction}}
In this paper, we consider the problem of finite-time nonlinear stochastic optimal control, specifically the stochastic dynamical system:
\begin{equation}
    dx = (f(x) + g(x)u) dt + \epsilon dw, \nonumber
\end{equation}
where $w$ is a Wiener process, \nxx{$\epsilon$ is a small parameter modulating the noise,}  and the cost to be optimized is $J^{\pi}(t,x) = \Exp{}{\int_t^T c(x_t,\pi_t(x_t))dt + c_T(x_T)}$, where the incremental cost has the form $c(x,u) = l(x) + \frac{1}{2} \tr{u}Ru$, \nxx{$c_T(x_T)$ is the terminal cost,} $\pi_t(x_t)$ is a control policy and the cost is minimized over all possible such policies given the initial condition $x_0 = x$ at $t=0$. 

A large majority of sequential decision making problems under uncertainty can be posed as a nonlinear stochastic optimal control problem that requires the solution of an associated Dynamic Programming (DP) problem (in discrete time) or the Hamilton-Jacobi-Bellman (HJB) equation in continuous time, however, as the state dimension increases, the computational complexity
grows exponentially in the state dimension \cite{bertsekas1}:  the manifestation of the so-called Bellman's ``curse of dimensionality (CoD)" \cite{bellman}. Approximate DP (ADP), or alternatively, in Reinforcement Learning (RL), simulations/episodes of the system under a policy, is used to get an approximation of the cost-to-go function by sampling the domain \cite{parr3,bertsekas1}.
But, as the dimension $d$ increases, the number of samples required for evaluation grows exponentially. 
There has been recent success using the Deep RL paradigm where deep neural networks are used as nonlinear function approximators to keep the parametrization tractable \cite{RLHD1, haarnoja2018soft, fujimoto2018addressing, RLHD4, RLHD5}, however, the training times required for these approaches, and the variance of the solutions, is still prohibitive.
Hence, the primary problem with ADP/ RL techniques is the CoD inherent in the complex representation of the cost-to-go function, and the exponentially large number of evaluations required for its estimation resulting in high solution variance which makes them unreliable and inaccurate. 

In the case of continuous state, control, and observation space problems, the
Model Predictive Control \cite{Mayne_1, Mayne_2} approach has been used widely in the control system and robotics community. For deterministic
systems, the process results in solving the original DP problem in a recursive
online fashion. However, stochastic control problems, and the control of
uncertain systems in general, is still an unresolved problem in MPC. As
succinctly noted in \cite{Mayne_1}, the problem arises due to
the fact that in stochastic control problems, the MPC optimization at every
time step cannot be over deterministic control sequences, but rather has to be
over feedback policies, which is, in general, difficult to
accomplish since a tractable parameterization of such policies to perform the optimization over, is, in general, unavailable. Thus, the tube-based MPC approach, and its stochastic counterparts,
typically consider linear systems \cite{T-MPC1, T-MPC2,T-MPC3} for which a
linear parametrization of the feedback policy suffices but the methods become intractable when dealing with nonlinear systems \cite{Mayne_3}. 
In more recent work, event-triggered MPC \cite{ETMPC1, ETMPC2} keeps the online planning computationally efficient by triggering replanning in an event driven fashion rather than at every time step.
We note that event-triggered MPC inherits the same issues mentioned above with respect to the stochastic control problem, and consequently, the techniques are intractable for nonlinear systems. 

The fundamental problem is that, albeit solving the open-loop problem via the Minimum Principle (MP) is much easier, solving for the optimal feedback control under uncertainty requires the solution of the DP equation, which is intractable. \\
\textbf{Contributions:} In this work, we seek to resolve this basic conundrum. We establish that the performance of the optimal deterministic feedback law found by neglecting the process noise, when applied to the stochastic system, is near-optimal, to $O(\epsilon^4)$ in the small noise parameter $\epsilon$, to the performance of the true optimal policy for the system. 
Utilizing the method of characteristics to analyze the HJB PDE, we show that the deterministic nominal/ open-loop solution obtained by satisfying the Minimum Principle \cite{bryson} is globally optimal given that the HJB equation admits a smooth solution. Further, we do a perturbation expansion of the feedback policy to obtain the equations governing linear and higher-order perturbation feedback terms of the optimal deterministic and stochastic policy. These equations show that the deterministic feedback law has a perturbation structure in that higher-order feedback terms do not affect lower-order terms, which is lost in the stochastic problem. Next, we establish that the performance of the MPC-type approach of solving the open loop problem (over progressively shorter horizons) results in $O(\epsilon^4)$ near-optimality to the performance of the true optimal policy for the stochastic system. We also establish that the performance of the optimal deterministic linear perturbation feedback is also near-optimal to $O(\epsilon^4)$.
Finally, albeit the MPC law is only ``near-optimum" theoretically, our computational experiments show that the MPC law has superior performance than the stochastic law, obtained by solving the stochastic HJB problem computationally, showing the computational intractability of the stochastic HJB problem owing to the loss of the perturbation structure. To the best of our knowledge, all of the above results are novel.

In contrast to our prior work \cite{parunandi2019TPFC}, we show fourth order near-optimality of the linear perturbation feedback, the perturbation structure of the deterministic and stochastic feedback law, and analytical as well as empirical evidence regarding the superiority of MPC to stochastic DP. The current manuscript expands on our previously published conference paper \cite{mohamed2022acc}, giving detailed proofs of all our developments and provides a comprehensive empirical (computational) evaluation of the theoretical developments. Finally, we note that there is very recent work from the MPC literature that utilizes a shrinking horizon MPC approach \cite{Diehl-e4} and proves a similar fourth-order near optimality result; however, we showed the result earlier \cite{mohamed2022acc, mohamed2020optimality} and our method of proof is different. Further, the other contributions listed above are unique to this paper.

The rest of the document is organized as follows: Section~\ref{sec:2} formulates the problem and provides some background on the HJB. We give a heuristic overview of the results established in this paper in Section \ref{sec:heuristic}, we recommend that the reader peruse this section carefully for easier comprehension of the details in the rest of the paper. Section~\ref{sec:3} presents the perturbation analysis of the optimal feedback control problem. In Section~\ref{sec:4}, we establish that the MPC and linear perturbation feedback law are near-optimal. We illustrate our results numerically in Section~\ref{section:results} using simple 1-dimensional examples as well as more practical examples from nonlinear robotic planning. 

\section{\uppercase{Preliminaries}}
\label{sec:2}
The following outlines the finite time stochastic optimal control problem formulation, and the associated deterministic problem, along with the associated Dynamic Programming (DP) problems that we shall study in this work.

\subsection{Problem Formulation}
For a dynamical system, we denote the state and control vectors by $x \in \  \mathbb{R}^{n_x}$ and $u \in \  \mathbb{R}^{n_u}$ respectively. The dynamics of the system is governed by the stochastic differential equation (SDE):
\nxx{
\begin{equation}
   dx= (f(x)+ g(x)u)dt + \epsilon dw,
    \label{eq:model}
\end{equation}
where $w \in \mathbb{R}^{n_x}$ is a Wiener process with covariance $Q \in \mathbb{R}^{n_x \times n_x}$, and $\epsilon$ is a small parameter modulating the noise amplitude to the system and affects the signal-to-noise ratio.} 


The stochastic optimal control problem for an initial state $x_0$ at time $t=0$ is defined as:
\begin{equation} \label{SOC}
    J(0,x_0) = \min_{\Pi} \ \Exp{}{\int^{T}_{0} c(x_t, \pi_t (x_t))dt + c_T(x_T)},
\end{equation}
subject to the SDE \eqref{eq:model},
where the optimization is over a family of  time-varying feedback policies $\Pi := \{ \pi_t(x);t\in[0,T]$\};
$J(\cdot, \cdot): \mathbb{R} \times \mathbb{R}^{n_x} \rightarrow \mathbb{R}$  is the cost function on applying the optimal policy $\pi^{*}$; $c(\cdot,\cdot): \mathbb{R}^{n_x} \times \mathbb{R}^{n_u} \rightarrow \mathbb{R} $  is the incremental cost function; and $c_T(\cdot): \mathbb{R}^{n_x} \rightarrow \mathbb{R}$ is the terminal cost function; where $T$ is the ``finite time horizon" of the problem.

 We shall make the following assumptions in the rest of the paper, and unless otherwise stated, all results assume the following.
\begin{assumption}{(A1)} \label{assump:1}
\textit{Cost Structure.} We assume that the incremental cost $c(x,u)$ is quadratic in the control variable, i.e., $c(x,u) = l(x) + \frac{1}{2}\tr{u}Ru$, with $R$ positive definite. The matrix $R$ will be replaced by $r$ for the scalar case.
\end{assumption}
\begin{assumption}{(A2)}\label{assump:2}
\textit{Smoothness.} We shall also assume that all the involved functions: $f(x), g(x), l(x), c_T(x), \pi_t(x)$ are five times continuously differentiable ($\mathcal{C}^5$) in their arguments.
\end{assumption}
%
%
\nxx{The continuous time DP or the stochastic Hamilton-Jacobi-Bellman (HJB) equation for the system in \eqref{eq:model} is given by \cite{OPT_todorov}
\begin{equation}\label{DP_C}
    -\frac{\partial J}{\partial t} = \min_u H(x,u) + \frac{\epsilon^2}{2} \sum_i\sum_j \frac{\partial^2 J}{\partial x_i \partial x_j}Q_{ij},
\end{equation}
where, $J= J(t,x)$ is the cost-to-go function, $J(T,x) = c_T(x)$ is the terminal condition, $H(x,u) = l(x) + \frac{1}{2}\tr{u}Ru + \tr{\frac{\partial J}{\partial x}} (f(x)+g(x)u)$ is the Hamiltonian of the system, and $Q= [Q_{ij}]$ is the intensity of the vector Wiener process.\\
Let $u(t,x)$ denote the corresponding optimal policy.
Then, it is sufficient that the optimal control $u$ satisfies the first-order necessary condition (since the Hamiltonian $H(x,u)$ is strictly quadratic in $u$):
\begin{equation}\label{SOP}
u= -R^{-1} \tr{g(x)} J^x, ~\text{where} ~ J^x = \frac{\partial J(t,x)}{\partial x}.
\end{equation}
The deterministic problem is a special case of the stochastic problem, where $\epsilon =0$ and the same cost as in \eqref{SOC}, except there is no expectation due to the lack of stochasticity. Utilizing essentially identical arguments as for the stochastic case, the optimal cost-to-go of the deterministic system, $\phi(t,x)$, satisfies the deterministic HJB equation:
\begin{equation}\label{detDP}
   - \frac{\partial{\phi}}{\partial t}= \min_{u}H(x,u),
\end{equation}
where the terminal condition $\phi(T,x) = c_T(x)$, and the optimal control is given by: 
\begin{align}
u^d = -R^{-1} \tr{g(x)}\phi^x, \text{where} ~\phi ^x = \frac{\partial \phi}{\partial x}. \label{DOP}
\end{align}}
The HJB equation above in \eqref{DP_C}/\eqref{detDP}, in some cases, might not have a smooth solution. In this paper, we shall consider only the case where it has a smooth solution and make the following assumption for our subsequent work. We note that this is a standard assumption in many references \cite{Lewis1,Lewis2,nsfadp}.
\begin{assumption}{(A3)}\label{assump:3}
    There exists a twice differentiable ($\mathcal{C}^2$) cost-to-go function that satisfies the stochastic HJB equation \eqref{DP_C} and a continuously differentiable function ($\mathcal{C}^1$) that satisfies the deterministic HJB \eqref{detDP}. 
\end{assumption}

\subsection{Heuristic Overview}\label{sec:heuristic}
In the following, we provide a heuristic overview of the main results of this paper. We use a scalar state in the following for simplicity.

Given sufficient smoothness of the involved functions, and an initial state $x_0$, any feedback policy has the perturbation expansion : $\pi_t(x_t) = \bar{u}_t + K_t^1 \delta x_t + K_t^2 \delta x_t^2 + \cdots,$ where $\bar{u}_t$ is the nominal action, i.e., control action under zero process noise and initial condition uncertainty, and the state perturbation $\delta x_t = x_t - \bar{x}_t$, where $x_t$ is the state under process and initial condition uncertainty while $\bar{x}_t$ is the state evolution under the nominal action $\bar{u}_t$.\\
In \textit{Section~\ref{sec:III-A}}, given the parameter $\epsilon$ modulating the process noise, we show that the closed loop cost of a feedback policy can be expanded as a series in $\epsilon^2$ (owing to the Wiener process noise): $J^{\pi}(t,x) = J^{\pi,0} (t,x) + \epsilon^2 J^{\pi,1} (t,x) + \epsilon^4 J^{\pi,2} (t,x) + \cdots.$ Further, we show that all terms up to $O(\epsilon^2)$ are solely due to the nominal ($\bar{u}_t$) and linear part ($K_t^1 \delta x_t$) of the perturbation expansion of the policy $\pi_t(\cdot)$.\\
In \textit{Section~\ref{Oe4}}, owing to the above fact, the costs of both the optimal deterministic policy and the optimal stochastic policy, when applied to the stochastic system, can be respectively expanded as: $\varphi(t,x) = \varphi^0(t,x) + \epsilon^2 \varphi^1 (t,x) + \epsilon^4 \varphi^2(t,x) + \cdots,$ and $J(t,x) = J^0(t,x) + \epsilon^2 J^1(t,x) + \epsilon^4  J^2 (t,x) + \cdots.$ Then, we show that $\varphi^0(t,x) = J^0(t,x)$, and $\varphi^1(t,x) = J^1(t,x)$, since they satisfy the same PDEs with the same terminal conditions. Thus, the performance of the deterministic and stochastic policy agree to the fourth order in the small parameter $\epsilon$, i.e., the performance of the deterministic policy is near-optimal for low noise levels. Further, owing to the interpretation of the terms $\varphi^0(t,x)$ and $\varphi^1(t,x)$ from Sec.~\ref{Oe4}, it follows that the performance of the deterministic linear perturbation feedback given by $\pi_t^{d,l}(x_t) = \bar{u}_t + K_t^{1} \delta x_t$ is also near optimal to fourth order in $\epsilon$ to the stochastic policy.\\
However, solving the deterministic HJB is intractable, and thus, in \textit{Section~\ref{sec:MOC}}, for the optimal deterministic policy, we seek to find the nominal optimal actions $\bar{u}_t$, the optimal linear feedback term $K_t^1$, and the optimal higher order terms $K_t^2, K_t^3 \cdots$ such that we can get a local approximation of the policy. We show that satisfying the Minimum principle is also sufficient for determining $\bar{u}_t$ if the HJB solution is smooth. Further, we determine the equations satisfied by the linear and higher order feedback terms. These equations show a perturbation structure in the sense that the calculation of $K_t^1$ is only affected by $\bar{u}_t$ and not by the higher order terms $K_t^2, K_t^3 \cdots$; $K_t^2$ is only affected by $\bar{u}_t$ and $K_t^1$, and so on. This implies that the deterministic feedback policy can be calculated exactly to any order $n$ without regard to terms beyond $n$.\\
In \textit{Section~\ref{sec.3D}}, we perform the same exercise for the stochastic policy and show that unfortunately the perturbation structure breaks down: the calculation of $\bar{u}_t$ depends on $K_t^1, K_t^2$, the calculation for $K_t^1$ depends on $\bar{u}_t,K_t^2, K_t^3,$ and so on. Thus, the optimal stochastic policy has to be calculated to a high enough order for accuracy, and unlike the deterministic case, there is no notion of an accurate local (say linear) approximation of said policy.

In Section~\ref{sec:4}, we use the results from Section~\ref{sec:3} to show that a shrinking horizon MPC (MPC-SH) approach, which is essentially a recursive online way of applying the optimal deterministic feedback policy to the stochastic system, is near optimal to fourth order in the parameter $\epsilon$ to the optimal stochastic policy. By the same token, the linear deterministic perturbation feedback policy is also near optimal to fourth order in $\epsilon$. This result shows that the MPC-SH and linear perturbation feedback are good approximations of the optimal stochastic policy for low noise levels.

In Section \ref{section:results}, we present extensive empirical (computational) results supporting the theoretical developments in Sections~\ref{sec:3} and \ref{sec:4}. In particular, we consider very simple one dimensional examples, where we can presumably perform very accurate calculations for solving the stochastic HJB using finite differences (FD), we show that the MPC and stochastic policy agree for low values of $\epsilon$. However, at higher noise levels, albeit the stochastic policy is supposed to perform better than MPC, in practice, the performance of the computed policy is inferior to MPC owing to the fundamental computational intractability of the stochastic problem as theorized in Section~\ref{sec:3}.

\section{\uppercase{A Perturbation Analysis of Optimal Feedback Control}}\label{sec:3}
In the following four subsections, we establish four basic results that show the structure of the optimal feedback law for nonlinear deterministic and stochastic systems and their computational implications. 

\subsection{Characterizing the Performance of a Feedback Policy}\label{sec:III-A}
\nxx{
In order to derive the results in this section, we first discretize the SDE in Eq.~\eqref{eq:model} via a Forward Euler approximation \cite[Ch.9]{kloedon_numerical_sde} with discretization time $\Delta t$:
\begin{align} 
\label{eq.0.1}
x_{k+1} &= x_k + (f(x_k) + g(x_k) u_k) \Delta t + \epsilon w_k \sqrt{\Delta t} + o(\Delta t),
\end{align}
where $\epsilon$ is a small perturbation parameter, $w_k$ is a white noise sequence with covariance $Q = I_{n_x \times n_x}$, $k= 0,1 \cdots N$, where $N = T/\Delta t$, and $||o(\Delta t)|| \rightarrow 0$ as $\Delta t \rightarrow 0$. At the end of this Section, we will obtain the continuous time result by letting $\Delta t \rightarrow 0$. For notational convenience, we shall not explicitly write the $o(\Delta t)$ term in the following. }


Let us also consider a noiseless version of the system dynamics given by \eqref{eq.0.1}, obtained by setting $w_k = 0$ for all $k$: $\bar{x}_{k+1} = \bar{x}_k + (f(\bar{x}_k) + g(\bar{x}_k) \bar{u}_k) \Delta t$, where we denote the ``nominal'' state trajectory as $\bar{x}_{k}$ and the ``nominal'' control as $\bar{u}_{k}$, with $\bar{u}_{k} = \pi_{k\Delta t}(\bar{x}_{k})$, where $\{\pi_{k\Delta t}(\cdot), k= 0,1\cdots N\}$ is a discretization of a given continuous-time control policy $\Pi= \{\pi_t(x), t \in [0,T]\}$. In the following, to simplify notation, we shall drop the explicit reference to the discretization time $\Delta t$ while denoting the discretized policy as $\{\pi_k(x)\}, k=0,1,\cdots, N$.

Under A\ref{assump:2}, $f(\cdot), g(\cdot)$ and $\pi_{k}(\cdot)$ are sufficiently smooth, so we expand the dynamics about the nominal trajectory using a Taylor series. Denoting  $\delta x_k = x_k - \bar{x}_k, \delta u_k = u_k - \bar{u}_k$, we can express, 
\begin{align}
\delta x_{k+1} &= A_k \delta x_k + B_k \delta u_k + S_k(\delta x_k) + \epsilon w_k \sqrt{\Delta t}, \label{eq.2}\\
\delta u_{k} &=  K_k \delta x_k + \tilde{S}_k(\delta x_k), \label{eq.3}
\end{align}
where $A_k = I_{n_x \times n_x} + \frac{\partial (f(x) + g(x)u)\Delta t}{\partial x}|_{\bar{x}_k, \bar{u}_k}$, \\$B_k = \frac{\partial (f(x) + g(x)u)\Delta t}{\partial u}|_{\bar{x}_k, \bar{u}_k} = g(\bar{x}_k)\Delta t$, $K_{k} = \frac{\partial \pi_{k}}{\partial x}|_{\bar{x}_k}$, and  $S_k(\cdot), \tilde{S}_k(\cdot)$ are second and higher order terms in the respective expansions. 

\nxx{Using \eqref{eq.2} and \eqref{eq.3}, we can write the closed-loop dynamics  of the trajectory $(\delta x_{k})^{N}_{k=1}$ as, 
\begin{align}
\label{eq.6}
\delta x_{k+1} = \underbrace{(A_k+B_kK_k)}_{\bar{A}_k} \delta x_k &+ \underbrace{B_k\tilde{S}_k(\delta x_k) + S_k(\delta x_k)}_{\bar{S}_k(\delta x_k)} \nonumber\\
&+ \epsilon w_k \sqrt{\Delta t}, 
\end{align} 
where $\bar{A}_k$ represents the linear part of the closed-loop system and the term $\bar{S}_k(\cdot)$ represents the second and higher order terms in the closed-loop system.}

Similarly, we can expand the instantaneous cost $c(x_{k}, u_{k})$ about the nominal values $c(\bar{x}_{k}, \bar{u}_{k})$ as,
\begin{align}
c(x_k,u_k)\nxx{\Delta t} &= \Big({l}(\bar{x}_{k}) + l^x_k \delta x_k + H_k(\delta x_k) + \nonumber\\ 
&\frac{1}{2}\tr{\bar{u}_k}R\bar{u}_k +  \tr{\delta u_k}R\bar{u}_k + \frac{1}{2} \tr{\delta u_k}R\delta u_k \Big)\Delta t,\label{eq.4}\\
c_{T}(x_{N}) &= {c}_{T}(\bar{x}_{N}) + C_T \delta x_N + H_T(\delta x_N),\label{eq.5}
\end{align}
where $l^x_k = \frac{\partial l}{\partial x}|_{\bar{x}_k}$, $C_T  = \frac{\partial c_T}{\partial x}|_{\bar{x}_N}$, and $H_k(\cdot)$ and $H_T(\cdot)$ are second and higher order terms in the respective expansions. The closed-loop incremental cost given in  \eqref{eq.4} can be expressed as
\begin{multline*}
c(x_k,u_k) \nxx{\Delta t} = \underbrace{\{{l}(\bar{x}_{k}) + \frac{1}{2}\tr{\bar{u}_k}R\bar{u}_k\}\Delta t}_{\bar{c}_k} + \nonumber\\
\underbrace{[l^x_k + \tr{\bar{u}_k}RK_k]\Delta t}_{\bar{C}_k} \delta x_k + \bar{H}_k(\delta x_k),
\end{multline*}
where $\bar{H}_k(\delta x_k)$ are the second and higher order terms.
Therefore, the cumulative cost of any given closed-loop trajectory $(x_{k}, u_{k})^{N}_{k=0}$  can be expressed as,
$
\mathcal{J}^{\pi}(x_0) = \sum^{N}_{k=0}c(x_{k}, \pi_{k}(x_{k})) \nxx{\Delta t} + c_{T}(x_{N})
$, which can be written in the following form:
\begin{align}
\label{eq.9a}
\mathcal{J}^{\pi} (x_0) &=\sum_{k=0}^N \bar{c}_k + \sum_{k=0}^N \bar{C}_k \delta x_k + \sum_{k=0}^N \bar{H}_k(\delta x_k),
\end{align}
where $\bar{c}_{N} = c_{T}(\bar{x}_{N}),  \bar{C}_{N} = C_{T}$.

We first show the following critical result.
\textit{Note:} The proofs for the results shown here are given in the appendix. 
\begin{lemma} 
\label{L1}
Given any sample path, the state perturbation equation given in  \eqref{eq.6} can be equivalently characterized  as
\begin{align}
\label{eq:mod-pert-1}
\delta x_{k}  = \delta x_k^l + e_k, ~ \delta x_{k+1}^l = \bar{A}_k \delta x_k^l + \epsilon w_k \sqrt{\Delta t}
\end{align} 
where $e_k$ is an $O(\epsilon^2)$ function that depends on the entire noise history $\{w_0,w_1,\cdots w_k\}$ and $\delta x_k^l$ evolves according to the  linear closed-loop system. Furthermore, $e_k = e_k^{(2)} + O(\epsilon^3)$, where $e_k^{(2)} = \bar{A}_{k-1} e_{k-1}^{(2)} + \begin{bmatrix}\tr{\delta x_{k-1}^{l}}\bar{S}_{1,k-1}^{(2)} \delta x_{k-1}^l\\ \vdots \\
\tr{\delta x_{k-1}^{l}}\bar{S}_{n_x,k-1}^{(2)} \delta x_{k-1}^l\end{bmatrix}$, $e_0^{(2)} = 0$, $\bar{S}_{i,k-1}^{(2)}$ represents the Hessian matrix corresponding to the Taylor series expansion of the $i^{th}$ component of the  vector valued function $\bar{S}_{k-1}(\cdot)$. 
\end{lemma}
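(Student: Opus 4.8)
The plan is to split the closed-loop perturbation into a purely linear part plus a residual and then control the residual order by order. Concretely, I would define $\delta x_t^l$ as the solution of the linear time-varying recursion $\delta x_{t+1}^l = \bar{A}_t \delta x_t^l + \epsilon w_t \sqrt{\Delta t}$ driven by the \emph{same} noise realization, with $\delta x_0^l = \delta x_0$, and set $e_t := \delta x_t - \delta x_t^l$. Subtracting this linear recursion from the full closed-loop dynamics in \eqref{eq.6} then gives the exact residual recursion $e_{t+1} = \bar{A}_t e_t + \bar{S}_t(\delta x_t)$ with $e_0 = 0$. Everything in the lemma now reduces to tracking the $\epsilon$-order of $e_t$ along this recursion and extracting its leading term.

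First I would pin down the size of the linear part. Since $\delta x_t^l$ is produced by a fixed deterministic linear system forced only by $\epsilon w_s \sqrt{\Delta t}$, it is a finite linear combination (over the finite horizon) of the $\epsilon w_s$, so $\delta x_t^l = O(\epsilon)$ for every $t$, and it depends on the accumulated noise sequence $\{w_0,\dots,w_{t-1}\}$. Next I would prove $e_t = O(\epsilon^2)$ by induction on $t$: the base case $e_0 = 0$ is immediate, and assuming $e_t = O(\epsilon^2)$ gives $\delta x_t = \delta x_t^l + e_t = O(\epsilon)$, whence $\bar{S}_t(\delta x_t) = O(\epsilon^2)$ because $\bar{S}_t(\cdot)$ contains only quadratic and higher terms. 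Passing this through $e_{t+1} = \bar{A}_t e_t + \bar{S}_t(\delta x_t)$ keeps $e_{t+1} = O(\epsilon^2)$, closing the induction. This establishes the first assertion that $\delta x_t = \delta x_t^l + e_t$ with $e_t = O(\epsilon^2)$.

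To isolate the leading $O(\epsilon^2)$ contribution I would Taylor-expand the forcing term about $\delta x_t^l$, writing $\bar{S}_t(\delta x_t^l + e_t) = \tfrac{1}{2}\delta x_t^{l'}\bar{S}_t^{(2)}\delta x_t^l + (\nabla \bar{S}_t)\,e_t + O(|\delta x_t^l|^3)$. The gradient $\nabla \bar{S}_t$ evaluated at $\delta x_t^l = O(\epsilon)$ is itself $O(\epsilon)$ since the leading part of $\bar{S}_t$ is quadratic, so the cross term is $O(\epsilon)\cdot O(\epsilon^2) = O(\epsilon^3)$, and the cubic remainder is $O(\epsilon^3)$ as well; only the Hessian quadratic form survives at order $\epsilon^2$. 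Writing $e_t = e_t^{(2)} + O(\epsilon^3)$ and collecting the $O(\epsilon^2)$ terms in the residual recursion yields $e_{t+1}^{(2)} = \bar{A}_t e_t^{(2)} + \tfrac{1}{2}\delta x_t^{l'}\bar{S}_t^{(2)}\delta x_t^l$ with $e_0^{(2)} = 0$, which after shifting the index is exactly the stated recursion (the factor $\tfrac12$ being absorbed into the Hessian convention for $\bar{S}_t^{(2)}$). The dependence of $e_t^{(2)}$ on the full past noise history is inherited from the $\delta x_s^l$ appearing in the quadratic forcing.

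The main obstacle I anticipate is the uniform $\epsilon$-order bookkeeping in the last step: I must justify that \emph{every} contribution other than the Hessian term is genuinely $O(\epsilon^3)$ and that the implicit constants in the induction $e_t = O(\epsilon^2)$ do not grow with $t$. Because the horizon $T$ is finite and $\bar{A}_t$, $\bar{S}_t^{(2)}$ are bounded deterministic quantities, these estimates hold uniformly, but this is where the real content of the argument lies, and the smoothness assumptions on $f(\cdot)$ and $\pi_t(\cdot)$ (guaranteeing the Taylor remainders behave) are precisely what make it go through.
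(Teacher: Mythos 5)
Your proposal is correct and follows essentially the same route as the paper: define $\delta x_t^l$ by the linear noise-driven recursion, set $e_t = \delta x_t - \delta x_t^l$, derive the exact residual recursion $e_{t+1} = \bar{A}_t e_t + \bar{S}_t(\delta x_t)$, establish $e_t = O(\epsilon^2)$ by induction using the fact that $\bar{S}_t(\cdot)$ is quadratic and higher order, and then extract the leading term $e_t^{(2)}$ by a second induction that keeps only the Hessian quadratic form in $\delta x_t^l$. Your added remarks on the $\tfrac{1}{2}$ Hessian convention and the uniformity of constants over the finite horizon are consistent with (and slightly more explicit than) the paper's argument.
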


Next, we have the following result for the expansion of the cost-to-go function $\mathcal{J}^{\pi}(x_0)$.
\begin{lemma}
\label{L2}
Given any sample path, the cost-to-go under a policy can be expanded about the nominal as: 
\begin{align*} 
\mathcal{J}^{\pi} (x_0)  =& \underbrace{\sum_k \bar{c}_k}_{\bar{J}^{\pi}} + \underbrace{\sum_k \bar{C}_k \delta x_k^l}_{\delta J_1^{\pi}} + \underbrace{\sum_k \tr{\delta x_k^{l}}\bar{H}_k^{(2)} \delta x_k^l+ \bar{C}_k e_k^{(2)}}_{\delta J_2^{\pi}} \\ &+ O(\epsilon^3),
\end{align*}
where $\bar{H}_k^{(2)}$ denotes the second order coefficient of the Taylor expansion of $\bar{H}_k(\cdot)$.
\end{lemma}

Finally, we have the following result characterizing the cost of the policy as the discretization time $\Delta t \rightarrow 0$. 

\begin{proposition}  
\label{prop1}  Under A\ref{assump:2}, and given that the closed loop system under the policy $\pi_t(\cdot)$ has a solution over the interval $[0,T]$ (A\ref{assump:3}), the mean of the cost-to-go function obeys: $\lim_{\Delta t \rightarrow 0}E[\mathcal{J}^{\pi}(x_0)] \equiv J^{\pi}(x_0) ={J}^{\pi,0}(x_0) + \epsilon^2 {J}^{\pi,1}(x_0) + \epsilon^4 {J}^{\pi,2}(x_0) + \mathcal{R}^{\pi}(x_0)$, for some constants ${J}^{\pi,k}(x_0)$, $k = 0,1,2$, where $\mathcal{R}^{\pi}(x_0)$ is $o(\epsilon^4)$, i.e., $\lim_{\epsilon \rightarrow 0}\epsilon^{-4}\mathcal{R}^{\pi}(x_0)=0$. Furthermore, the term ${J}^{\pi,0}$ arises solely from the nominal control sequence while ${J}^{\pi,1}$ is solely dependent on the nominal control and the linear part of the perturbation closed-loop.
 \end{proposition}
 
\begin{remark}
The interpretation of the result above is as follows: it shows that the $\epsilon^0$ term, ${J}^{\pi,0}$, in the cost, stems from the nominal action of the control policy, the $\epsilon^2$ term, $J^{\pi,1}$, stems from the linear feedback action of the closed-loop, while the higher order terms stem from the higher order terms in the feedback law. In the next section, we use the HJB equation to find the equations satisfied by these terms.
\end{remark}

\begin{remark}In the above development, we have derived the expression for the cost-to-go of a policy from the initial state $x_0$ at the initial time $t=0$, i.e., the above expressions are for $J^{\pi}(0,x_0)$, however, such an expression is also valid for any pair $(t,x)$ simply by repeating the above development starting at time $t$ from state $x$, i.e., any $J^{\pi}(t,x) = J^{\pi,0}(t,x) + \epsilon^2 J^{\pi,1}(t,x) + \epsilon^4 J^{\pi,2}(t,x) + \mathcal{R}^{\pi}(t,x)$.
\end{remark}

\subsection{A Closeness Result for Optimal Stochastic and Deterministic Control} \label{Oe4}
Recall the stochastic and deterministic HJB equations \eqref{DP_C}, \eqref{detDP} from Section~\ref{sec:2}, and the associated optimal control policies \eqref{SOP} and \eqref{DOP}. For simplicity, we consider the scalar case here, the vector case is detailed in the Appendix.
Let $\varphi(t,x)$ denote the cost-to-go of the deterministic policy when applied to the stochastic system, i.e., $u^d$ applied to Eq. \eqref{eq:model}. Note that the cost-to-go of the deterministic policy applied to the stochastic system, $\varphi(t,x)$, is different from the deterministic cost-to-go $\phi(t,x)$, and $\varphi(t,x)$ satisfies a policy evaluation equation \cite{bertsekas1}. 
Similar to the stochastic HJB, the continuous time policy evaluation equation for $\varphi(t,x)$ can be written as:
\begin{equation}\label{CPE}
    \frac{\partial \varphi}{\partial t} = l(x) + \frac{1}{2}r(u^d)^2 + \varphi^x(f(x)+g(x)u^d) + \frac{\epsilon^2}{2} \varphi^{xx},
\end{equation}
where $u^d = -\frac{1}{r}g(x) \phi^x$.
Then, we have the following key result. An analogous version of the following result was originally proved in a seminal paper \cite{fleming1971stochastic} for first passage problems. We provide a simple derivation of the result for a finite time final value problem below. 
\begin{proposition}
\label{prop_e4}
The cost function of the optimal stochastic policy, $J(t,x)$, and the cost function of the ``deterministic policy  applied to the stochastic system", $\varphi(t,x)$, satisfy: $J(t,x) = J^0 (t,x) + \epsilon^2 J^1(t,x) + \epsilon^4 J^2 (t,x)+ \cdots$, and $\varphi(t,x) = \varphi^0 (t,x) + \epsilon^2 \varphi^1 (t,x) + \epsilon^4 \varphi^2(t,x) + \cdots$. Furthermore, $J^0(t,x) = \varphi^0 (t,x)$, and $J^1(t,x) = \varphi^1(t,x)$, for all $(t,x)$.
\end{proposition}

\begin{proof}
We show a sketch here for the case of a scalar state, please refer to the appendix for the complete proof.\\
Due to Proposition \ref{prop1}, the optimal cost function satisfies:
$
J(t,x) = J^0(t,x) + \epsilon^2 J^1(t,x) + \epsilon^4 J^2(t,x) + \cdots$. Next, we substitute the above equation into the HJB equation \eqref{DP_C}, along with the minimizing control \eqref{SOP} to obtain a perturbation expansion of the optimal cost function as a power series in $\epsilon^2$. Equating the $O(\epsilon^0)$ and $O(\epsilon^2)$ terms on both sides results in governing equations for the $J^0$ and $J^1$ terms.
We also know that the cost function of the deterministic policy when applied to the stochastic system satisfies $\varphi(t,x) = \varphi^0(t,x) + \epsilon^2 \varphi^1(t,x)+ \cdots$. Similar to above, we substitute this expression into the policy evaluation equation \eqref{CPE}, along with the deterministic optimal control expression $u^d = - \frac{1}{r} g(x)\phi^x$, to obtain the governing equations for $\varphi^0$ and $\varphi^1$. These equations, when compared with those for $J^0$ and $J^1$, are seen to be identical with the same terminal conditions thereby proving the result.
\end{proof}

\begin{remark} \label{remark:linear_fb_nearoptimality}
\textit{$\mathcal{O}(\epsilon^4)$ Near-Optimality of Linear Perturbation Feedback.} According to Proposition \ref{prop1}, we know that the $O(\epsilon^2)$ term in the perturbation expansion above stems from the linear feedback term for any policy, and thus, the same is true for the optimal deterministic policy. Given an initial state $x_0$, let $(\bar{x}(t),\bar{u}(t))$ denote the optimal nominal trajectory under the deterministic feedback law and let $K_t$ denote the linear feedback corresponding to the expansion of the feedback law about this nominal trajectory. Therefore, it follows that if one applies the perturbation linear feedback law $u(t,x_t) = \bar{u}_t + K_t \delta x_t$, where the feedback acts on the perturbation from the nominal, $\delta x_t = x_t - \bar{x}_t$, starting at the initial state $x_0$, then the performance of this linear feedback policy is also within $O(\epsilon^4)$ of the optimal stochastic policy. 
\end{remark}
In the next section, we show how to obtain the optimal linear feedback term $K_t$, and higher order terms of the feedback law.


%
\subsection{A Perturbation Expansion of Deterministic Optimal Feedback Control}\label{sec:MOC}
In this section, we will use the classical Method of Characteristics \cite{Courant-Hilbert} to derive results regarding the perturbation structure of deterministic optimal control problem. 
For simplicity, we derive the following for the case of a scalar state, please see the Appendix for the vector case.

Let us recall the Hamilton-Jacobi-Bellman (HJB) equation in continuous-time under the same assumptions as above, i.e., quadratic in control cost $c(x,u) = l(x) + \frac{1}{2}r u^2$, and affine in control dynamics $\dot{x} = f(x) + g(x)u$ \cite{bryson}:
\begin{equation}
    \frac{\partial J}{\partial t} + l(x) - \frac{1}{2}\frac{g(x)^2}{r} (J^x)^2 + f(x)J^x = 0,
\end{equation}
where $J = J(t,x), \; J^x = \frac{\partial J}{\partial x}$, and the equation is integrated back in time with terminal condition $J(T,x) = c_T(x)$. Define $\frac{\partial J}{\partial t} = p, \; J^x= q$, then the HJB can be written as $F(t,x,J,p,q) =0$, where $F(t,x,J,p,q) = p+l(x) -\frac{1}{2}\frac{g(x)^2}{r}q^2 + f(x)q$. One can now write the Lagrange-Charpit equations \cite{Courant-Hilbert} for the HJB as:
\begin{align}
    \dot{x} &= F_q = f(x) -\frac{g(x)^2}{r}q, \label{L-C-1}\\
    \dot{q} &= -F_x -qF_J = -l^x + \frac{g(x)g^x}{r}q^2 - f^xq,\label{L-C-2}
\end{align}
with the terminal conditions $x(T) = x_T, \; q(T) = c_T^x(x_T)$, where $F_x = \frac{\partial F}{\partial x}$, $F_q = \frac{\partial F}{\partial q}$, $g^x = \frac{\partial g}{\partial x}$, $l^x = \frac{\partial l}{\partial x}$, $f^x = \frac{\partial f}{\partial x}$ and $c_T^x = \frac{\partial c_T}{\partial x}$.\\
Given a terminal condition $x_T$, the equations above can be integrated back in time to yield a characteristic curve of the HJB PDE. First, we present an important fact regarding the Minimum Principle (MP) and the Characteristic ODEs above.

\begin{remark}\label{remark:global_optimum}
\textit{Convexity and Global Minimum.} Given that the HJB PDE has a smooth solution (Assumption \ref{assump:3}), the characteristic curves governed by the Lagrange-Charpit equations \eqref{L-C-1}, \eqref{L-C-2}, in $(x,q)$ space are unique, i.e., given any time $t$ and state $x_t$, there is a unique co-state $q_t(x_t)$ corresponding to the state. However, satisfying the MP from an initial state $x_0$ simply yields a characteristic curve of the HJB that passes through the state $x_0$.  Therefore, the open-loop optimal trajectory, found by satisfying the Minimum Principle is also the unique global minimum even though the open-loop problem is non-convex. This observation is stated in the following result, and a formal proof is given in the appendix.
\end{remark}

\begin{proposition}\label{OL_optimality}
\textit{Global Optimality of open-loop solution.} 
Under A\ref{assump:1}, A\ref{assump:3} an optimal trajectory that satisfies the Minimum Principle from a given initial state $x_0$, is the unique global minimum of the determinsitic optimal control problem starting at the initial state $x_0$.
\end{proposition}

Suppose now that one is given an optimal nominal trajectory $\bar{x}_t$, $t \in [0,T]$ for a given initial condition $x_0$, from solving the open-loop optimal control problem. Let the nominal terminal state be $\bar{x}_T$. We now expand the HJB solution around this nominal optimal solution to obtain the linear and higher order perturbation feedback terms assuming that the solution is sufficiently smooth. To this purpose, let $x_t = \bar{x}_t + \delta x_t$, for $t\in [0,T]$, where $\delta x_t$ is the perturbation term. Then, expanding the optimal cost function around the nominal yields: $J(t,x_t) = \bar{J}_t + G_t\delta x_t + \frac{1}{2}P_t\delta x_t^2+ \frac{1}{6}S_t\delta x_t^3+\cdots,$ where $\bar{J}_t = J(t,\bar{x}_t), G_t = \frac{\partial J}{\partial x_t}|_{\bar{x}_t}, P_t = \frac{\partial^2 J}{\partial x_t^2}|_{\bar{x}_t}$, $S_t = \frac{\partial^3 J}{\partial x_t^3}|_{\bar{x}_t}$. Then, the co-state $q = \frac{\partial J}{\partial x_t} = G_t + P_t \delta x_t + \frac{1}{2}S_t \delta x^2_t+ \cdots$. 
For simplicity, we assume that $g^x = 0$ (this is relaxed but at the expense of a rather tedious derivation detailed in the Appendix). Hence, 
\begin{align*}
    \underbrace{\frac{d}{dt} (\bar{x}_t + \delta x_t)}_{\dot{\bar{x}}_t + \dot{\delta x}_t} =& \underbrace{f(\bar{x}_t+ \delta x_t)}_{(\bar{f}_t + \bar{f}_t^x \delta x_t+ \frac{1}{2}\bar{f}_t^{xx} \delta x^2_t+O(\delta x_t^3))} \\
    &-\frac{g^2}{r}(G_t + P_t \delta x_t+ \frac{1}{2}S_t \delta x^2_t+O(\delta x_t^3)),
\end{align*}
where $\bar{f}_t = f(\bar{x}_t), \bar{f}_t^x = \frac{\partial f}{\partial x_t}|_{\bar{x}_t}$. Expanding in powers of the perturbation variable $\delta x_t$, the equation above can be written as (after noting that $\dot{\bar{x}}_t = \bar{f}_t - \frac{g^2}{r}G_t$ due to the nominal trajectory $\bar{x}_t$ satisfying the characteristic equation):
\begin{align} \label{x_perturb}
    \dot{\delta x}_t = (\bar{f}_t^x - \frac{g^2}{r} P_t)\delta x_t + \frac{1}{2} (\bar{f}_t^{xx} - \frac{g^2}{r}S_t )\delta x^2_t + O(\delta x_t^3).
\end{align}
Next, we have: $ \frac{dq}{dt} = -l^x - f^x q $
\begin{align}
    &\frac{d}{dt} (G_t+ P_t\delta x_t+ \frac{1}{2} S_t\delta x^2_t + O(\delta x^3)) 
    = -(\bar{l}_t^x + \bar{l}_t^{xx}\delta x_t \nonumber\\
    &+\frac{1}{2} \bar{l}_t^{xxx}\delta x^2_t+ O(\delta x^3)) 
    - \Big(\bar{f}_t^x + \bar{f}_t^{xx}\delta x_t + \frac{1}{2} \bar{f}_t^{xxx}\delta x^2_t \nonumber\\
    &+ O(\delta x^3)\Big)(G_t + P_t \delta x_t + \frac{1}{2} S_t\delta x^2_t + O(\delta x^3)),
\end{align}
where $\bar{f}_t^{xx} = \frac{\partial^2 f}{\partial x^2}|_{\bar{x}_t},\bar{f}_t^{xxx} = \frac{\partial^3 f}{\partial x^3}|_{\bar{x}_t}, \bar{l}^x_t = \frac{\partial l}{\partial x}|_{\bar{x}_t},\bar{l}^{xx}_t = \frac{\partial^2 l}{\partial x^2}|_{\bar{x}_t}, \bar{l}^{xxx}_t = \frac{\partial^3 l}{\partial x^3}|_{\bar{x}_t}$.
Using $\frac{d}{dt}P_t \delta x_t = \dot{P}_t \delta x_t + P_t \dot{\delta x}_t$, $\frac{d}{dt}S_t \delta x^2_t = \dot{S}_t \delta x^2_t + 2 S_t \delta x \dot{\delta x}_t$, and substituting for $\dot{\delta x}_t$ from \eqref{x_perturb},  and expanding the two sides above in powers of $\delta x_t$ yields:
\begin{align*}
    &\dot{G}_t + (\dot{P}_t + P_t (\bar{f}_t^x - \frac{g^2}{r} P_t))\delta x_t + \frac{1}{2} \Big( P_t (\bar{f}_t^{xx} - \frac{g^2}{r} S_t) \\
    &+ \dot{S}_t + 2 S_t(\bar{f}_t^x - \frac{g^2}{r} P_t) \Big) \delta x^2 + O(\delta x^3) \\
    & \quad = -(\bar{l}_t^x +\bar{f}_t^x G_t) -(\bar{l}_t^{xx} + \bar{f}_t^x P_t +\bar{f}_t^{xx}G_t)\delta x_t \\
    &- \frac{1}{2} (\bar{l}_t^{xxx}  +\bar{f}_t^{xxx}G_t + 2 \bar{f}_t^{xx} P_t + \bar{f}_t^x S_t) \delta x^2 + O(\delta x^3).
\end{align*}
Equating the first three powers of $\delta x_t$ yields:
\begin{align}
    &\dot{G}_t + \bar{l}_t^x + \bar{f}_t^x G_t = 0, \\
    &\dot{P}_t + \bar{l}_t^{xx} + P_t \bar{f}_t^x + \bar{f}_t^x P_t - P_t \frac{g^2}{r} P_t + \bar{f}_{t}^{xx} G_t = 0, \\
    & \dot{S}_t + \bar{l}_t^{xxx} + P_t \bar{f}_t^{xx} + 2 \bar{f}_t^{xx} P_t  + \bar{f}_t^x S_t + 2 S_t \bar{f}_t^x - P_t \frac{g^2}{r} S_t \nonumber \\
    &- 2 S_t \frac{g^2}{r} P_t  + \bar{f}_{t}^{xxx} G_t = 0
\end{align}
Using the first-order necessary condition $u(t,x_t) = - \frac{g}{r} J^x$, the optimal feedback law is given by: 
\begin{align}
    u(t,x_t) &= - \frac{g}{r} J^x  =  \underbrace{- \frac{g}{r}G_t}_{\bar{u}_t} \underbrace{- \frac{g}{r} P_t}_{K_t} \delta x_t \underbrace{- \frac{g}{2r} S_t}_{K^{(2)}_t} \delta x^2_t + O(\delta x_t^3) \\
     u(t,x_t) &= \bar{u}_t + K_t \delta x_t +  K^{(2)}_t \delta x^2_t +  \cdots \nonumber.
\end{align}
Thus, we see that the optimal feedback law has a perturbation structure in that the first-order terms $P_t$ do not affect the zeroth-order terms $G_t$, and the second and higher-order terms, $S_t$ etc., do not affect the first-order term $P_t$ and so on.\\
Now, we provide the final result for the general vector case with a state-dependent control influence matrix (please see the Appendix for details). 
\begin{definition}\label{def:system_vec}
Let the control influence matrix be given as:
$g(x) = \begin{bmatrix} g_1^1(x) \cdots g_1^p(x)\\ \ddots\\ g_n^1(x) \cdots g_n^p(x) \end{bmatrix} = \begin{bmatrix} \Gamma^1 (x) \cdots \Gamma^p (x) \end{bmatrix}$, 
i.e., $\Gamma^j$ represents the control influence vector corresponding to the $j^{th}$ input. Let $\bar{g}_t = g(\bar{x}_t)$, where $\{\bar{x}_t\}$ represents the optimal nominal trajectory. 
Further, let $f(x) = \begin{bmatrix} f_1(x) \cdots f_n(x)\end{bmatrix}^\intercal$ 
denote the drift of the system. Let $G_t = [G_t^1\cdots G_t^n]^{\intercal}$, and $ R^{-1} \bar{g}_t^{\intercal} G_t = - [\bar{u}_t^1 \cdots \bar{u}_t^p]^{\intercal}$, denote the optimal nominal co-state and control vectors respectively.
Let the Jacobian and Hessian of our system matrices be defined as:

\begin{align}
    \bar{f}_t^x &= \begin{bmatrix} \frac{\partial f_1}{\partial x_1} \cdots \frac{\partial f_1}{\partial x_n}\\ \ddots\\ \frac{\partial f_n}{\partial x_1} \cdots \frac{\partial f_n}{\partial x_n} \end{bmatrix}|_{\bar{x}_t}, ~~
    \bar{f}_t^{xx,i} = \begin{bmatrix} \frac{\partial^2 f_1}{\partial x_1 \partial x_i} \cdots \frac{\partial^2 f_1}{\partial x_n\partial x_i}\\ \ddots\\ \frac{\partial^2 f_n}{\partial x_1 \partial x_i} \cdots \frac{\partial^2 f_n}{\partial x_n \partial x_i} \end{bmatrix}|_{\bar{x}_t}, \nonumber\\
    \bar{g}_t^{x,i} &= \begin{bmatrix} \frac{\partial g_1^1}{\partial x_i} \cdots \frac{\partial g_1^p}{\partial x_i}\\ \ddots\\ \frac{\partial g_n^1}{\partial x_i} \cdots \frac{\partial g_n^p}{\partial x_i}\end{bmatrix}|_{\bar{x}_t} \label{eq:sys_matrices_vec}.
\end{align}
Similarly $\bar{\Gamma}^{j,x}_t = \nabla_x \Gamma^j|_{\bar{x}_t}$, $\bar{\Gamma}^{j,xx,i}_t= \nabla_{xx} \Gamma^j|_{\bar{x}_t}$ for the vector function $\Gamma^j$. Finally, define $\mathcal{A}_t = \bar{f}_t^x + \sum_{j=1}^p \bar{\Gamma}_t^{j,x} \bar{u}_t^j$, $\bar{l}_t^x = \nabla_x l|_{\bar{x}_t}$, and $\bar{l}_t^{xx} = \nabla^2_{xx} l |_{\bar{x}_t}$. 
\end{definition}

\begin{proposition} \label{T-PFC}
\nxx{Under A\ref{assump:1},}and given the above definitions, the following result holds for the evolution of the co-state/ gradient vector $G_t$, and the Hessian matrix $P_t$, of the optimal cost function ${J}_t(x_t)$, evaluated on the optimal nominal trajectory $\bar{x}_t,t\in[0,T]$:
\begin{align}
    \dot{G}_t &+ \bar{l}_t^x + \mathcal{A}_t^{\intercal} G_t = 0,\label{T-PFC-G} \\
    \dot{P}_t &+ \mathcal{A}_t^{\intercal} P_t + P_t \mathcal{A}_t + \bar{l}_t^{xx} \nonumber\\
    + \sum_{i=1}^n  &[\bar{f}_t^{xx,i} + \sum_{j=1}^p \bar{\Gamma}_t^{j,xx,i} \bar{u}_t^j]G_t^i
     - K_t^{\intercal} R K_t= 0, \label{T-PFC-P} \\
     K_t &= - R^{-1} [\sum_{i=1}^n \bar{g}_t^{x,i,\intercal}  G_t^i + \bar{g}_t^{\intercal} P_t], \label{T-PFC-K}
\end{align}
with terminal conditions $G_T = \nabla_x c_T|_{\bar{x}_T}$, and $P_T = \nabla_{xx}^2 c_T|_{\bar{x}_T}$ and the control input with the optimal linear feedback is given by $u_t = \bar{u}_t + K_t \delta x_t$.\\
\end{proposition}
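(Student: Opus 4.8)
The plan is to mirror the scalar Method of Characteristics derivation just completed, promoting every scalar quantity to its vector/matrix analogue and, crucially, carrying the state dependence of the control influence matrix $\mathcal{G}(x)$ through the whole expansion. First I would write the continuous-time HJB equation for the vector problem, $\partial_t J + l + \mathcal{F}^\intercal J_x - \tfrac12 J_x^\intercal \mathcal{G}R^{-1}\mathcal{G}^\intercal J_x = 0$, obtained after minimizing the quadratic-in-$u$ Hamiltonian with optimizer $u^\ast = -R^{-1}\mathcal{G}^\intercal J_x$. Setting the co-state $q = J_x$, the Lagrange--Charpit system generalizing \eqref{L-C-1}--\eqref{L-C-2} becomes $\dot x = \mathcal{F} - \mathcal{G}R^{-1}\mathcal{G}^\intercal q$ and $\dot q = -\nabla_x l - (\nabla_x \mathcal{F})^\intercal q + \tfrac12 \nabla_x(q^\intercal \mathcal{G}R^{-1}\mathcal{G}^\intercal q)$, where the last term is the new contribution that vanishes in the scalar $g^x=0$ case and that ultimately supplies the $\bar\Gamma^{j,x}$ and $\bar{\mathcal{G}}^{x,i}$ pieces of the result.

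Next I would expand about the optimal nominal trajectory $\bar x_t$ exactly as before: set $x_t = \bar x_t + \delta x_t$ and use the quadratic expansion $J(x_t) = \bar J_t + G_t^\intercal \delta x_t + \tfrac12 \delta x_t^\intercal P_t \delta x_t + \cdots$ to deduce $q = G_t + P_t \delta x_t + O(\|\delta x_t\|^2)$. Substituting into the first characteristic equation and Taylor-expanding $\mathcal{F}$ and $\mathcal{G}$ yields, at first order, the vector analogue of \eqref{x_perturb}, namely $\dot{\delta x}_t = (\mathcal{A}_t - \bar{\mathcal{G}}_t R^{-1}\bar{\mathcal{G}}_t^\intercal P_t)\delta x_t + O(\|\delta x_t\|^2)$, where the closed-loop Jacobian $\mathcal{A}_t = \bar{\mathcal{F}}_t^x + \sum_j \bar\Gamma_t^{j,x}\bar u_t^j$ arises precisely because differentiating $\mathcal{G}(x)\bar u_t$ with $\bar u_t$ held fixed produces $\sum_j \bar\Gamma_t^{j,x}\bar u_t^j$. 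Then I would insert $q = G_t + P_t\delta x_t$ into the co-state equation, apply the chain rule $\tfrac{d}{dt}(P_t\delta x_t) = \dot P_t\delta x_t + P_t\dot{\delta x}_t$ with the $\dot{\delta x}_t$ just obtained, and match powers of $\delta x_t$. The $O(1)$ terms collapse to \eqref{T-PFC-G} once one checks that the zeroth-order part of $\tfrac12\nabla_x(q^\intercal\mathcal{G}R^{-1}\mathcal{G}^\intercal q)$ equals $-\sum_j \bar u_t^j \bar\Gamma_t^{j,x\intercal}G_t$, which combines with $-\bar{\mathcal{F}}_t^{x\intercal}G_t$ to give exactly $-\mathcal{A}_t^\intercal G_t$.

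The $O(\delta x_t)$ matching is where the real work lies and produces \eqref{T-PFC-P}. Here the second-order Taylor coefficients of $\mathcal{F}$ and of each $\Gamma^j$ enter, weighted by the co-state components $G_t^i$, assembling into the tensor sum $\sum_i[\bar{\mathcal{F}}_t^{xx,i} + \sum_j \bar\Gamma_t^{j,xx,i}\bar u_t^j]G_t^i$; simultaneously the $\mathcal{G}$-dependent quadratic in $q$ contributes first-order terms bilinear in $\bar{\mathcal{G}}^{x,i}$, $P_t$, and $G_t$, while the chain-rule term $P_t\dot{\delta x}_t$ supplies $P_t\mathcal{A}_t$ and $-P_t\bar{\mathcal{G}}_tR^{-1}\bar{\mathcal{G}}_t^\intercal P_t$. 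In parallel I would expand the feedback law $u(x) = -R^{-1}\mathcal{G}(x)^\intercal q$ to first order, reading off $\bar u_t$ at $O(1)$ and, by the product rule, $K_t = -R^{-1}[\sum_i \bar{\mathcal{G}}_t^{x,i\intercal}G_t^i + \bar{\mathcal{G}}_t^\intercal P_t]$ at $O(\delta x_t)$, which is \eqref{T-PFC-K}.

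The main obstacle, and the step I expect to consume the most effort, is verifying the completion of squares: showing that the pure-$P$ term $-P_t\bar{\mathcal{G}}_tR^{-1}\bar{\mathcal{G}}_t^\intercal P_t$, the $\bar{\mathcal{G}}^{x,i}$--$P_t$ cross terms, and the pure $\bar{\mathcal{G}}^{x,i}$ quadratic terms coming from the Hamiltonian's $x$-gradient all recombine into the single symmetric term $-K_t^\intercal R K_t$ with $K_t$ as above, since $K_t^\intercal R K_t = M_t^\intercal R^{-1} M_t$ for $M_t = \sum_i \bar{\mathcal{G}}_t^{x,i\intercal}G_t^i + \bar{\mathcal{G}}_t^\intercal P_t$. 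This is exactly the index-heavy bookkeeping flagged as ``tedious'' in the excerpt, and it is the reason the state-dependence of $\mathcal{G}$ cannot be handled by naive substitution into the $g^x=0$ formulas. Finally, the terminal conditions $G_T = \nabla_x c_T|_{\bar x_T}$ and $P_T = \nabla^2_{xx}c_T|_{\bar x_T}$ follow immediately by differentiating the terminal identity $J_T = c_T$ once and twice, closing the backward integration.
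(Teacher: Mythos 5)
Your proposal follows essentially the same route as the paper's proof: expand the Lagrange--Charpit characteristic equations of the HJB about the optimal nominal trajectory, substitute the quadratic expansion $J = \bar J_t + G_t^\intercal\delta x_t + \tfrac12\delta x_t^\intercal P_t\delta x_t$ (hence $q = G_t + P_t\delta x_t + \cdots$), match powers of $\delta x_t$, and regroup the $\bar{\mathcal{G}}_tR^{-1}\bar{\mathcal{G}}_t^\intercal$, cross, and pure-gradient quadratic terms into $-K_t^\intercal R K_t$ via the definition of $K_t$. The only cosmetic difference is that the paper carries out the vector-case bookkeeping in indicial notation (after first doing the scalar case with $g^x\neq 0$), whereas you describe it in matrix form; the completion-of-squares step you flag as the main effort is exactly the regrouping the paper performs when passing from its expanded $\dot P_{ij}$ equation to the final Riccati-like form.
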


\begin{remark} \label{T-PFC-remarks} 
\textit{\nxx{Not standard LQR.}}
The co-state equation \eqref{T-PFC-G} above is identical to the co-state equation in the Minimum Principle \cite{bryson, Pontryagin}. However, the Hessian $P_t$ equation \eqref{T-PFC-P} is Riccati-like with some important differences: note the extra second order terms due to $\bar{f}_t^{xx,i}$ and $\bar{\Gamma}_t^{xx,i}$ in the second line stemming from the nonlinear drift and input influence vectors and an extra term in the gain equation \eqref{T-PFC-K} coming from the state dependent influence matrix. These terms are not present in the LQR Riccati equation, and thus, it is clear that this cannot be a traditional perturbation feedback design \cite[Ch. 6]{bryson}. 
\end{remark}

%
\subsection{Loss of Perturbation Structure in Stochastic Control} ~\label{sec.3D}
Finally, we outline the loss of the perturbation structure in the stochastic problem. For the sake of simplicity, we only consider the scalar case, however, even this simplest case brings out the difficulty associated with stochastic control while the generalization to the vector case is relatively straightforward, albeit tedious.

Recall the stochastic HJB:
\begin{equation}\label{eq.stochastic_hjb}
    -\frac{\partial J}{\partial t} = \min_{u} [H(x,u)] + \nxx{\frac{\epsilon^2}{2}} \frac{\partial^2 J}{\partial x^2},
\end{equation}
where $H(x,u) = l(x) + \frac{1}{2}ru^2 + (f(x)+g u) \frac{\partial J}{\partial x}$ is the Hamiltonian of the system, and the equation is integrated backwards from a terminal condition $J(T,x) = c_T(x)$. For simplicity, we assume that $g$ is not state dependent in the following derivation and we also assume the noise variance $Q = 1$, which otherwise would appear in the diffusion term in \eqref{eq.stochastic_hjb}. Suppose now that we are given the optimal policy $u(t,x)$ and suppose that the nominal trajectory of the system (without noise) starting at some $x_0$ is given by $\{\bar{x}_t\}$ under the nominal control $\{\bar{u}_t\}$. As was done previously, let us now expand the solution of the equation above in terms of the perturbations from this nominal trajectory, $\delta x_t = x_t - \bar{x}_t$. Then, given the optimal nominal control $\bar{u}_t$, we can solve the minimization of the Hamiltonian as:
\begin{align}
    \min_{u_t}&\ H(x_t,u_t) = \min_{\delta u_t} H(\bar{x}_t + \delta x_t, \bar{u}_t + \delta u_t),\\
    & = \min_{\delta u_t} \Big[l(\bar{x}_t + \delta x_t) + \frac{r}{2} \bar{u}_t^2 + (f(\bar{x}_t + \delta x_t) + g\bar{u}_t) \frac{\partial J}{\partial x} \nonumber \\
    & + r \bar{u}_t \delta u_t + \frac{r}{2} \delta u_t^2 + g\delta u_t \frac{\partial J}{\partial x} \Big], \nonumber
\end{align}
which leads to the necessary condition for a minimum:
\begin{align} \label{eq.delt_u_necessary_cond}
(g\frac{\partial J}{\partial x} + r\bar{u}_t ) + r \delta u_t = 0,
\end{align}
which is also sufficient for a minimum since $r>0$ leading to $H$ being strictly quadratic in the variable $\delta u_t$. From Eq.~\eqref{eq.delt_u_necessary_cond}, the optimizing perturbation control is given by $\delta u_t = -\bar{u}_t - \frac{g}{r} \frac{\partial J}{\partial x}$.

Now, let us expand the dynamics and the optimal cost function in the HJB in terms of their perturbations from the nominal trajectory:
$f(x_t) = f(\bar{x}_t) + F_t^1 \delta x_t + \frac{1}{2} F_t^2 \delta x_t^2 + \cdots$, $J(t,x_t) =  \bar{J}_t(\bar{x}_t) + K_t^1 \delta x_t + \frac{1}{2} K_t^2 \delta x_t^2+ \cdots$, where the $F_t^i, K_t^i$ represent the Taylor coefficients of the series expansion of these functions. Therefore, $\frac{\partial J}{\partial x} = K_t^1 + K_t^2 \delta x_t + \frac{K_t^3}{2} \delta x_t^2 + \cdots$, $\frac{\partial^2 J}{\partial x^2} = K_t^2 + K_t^3 \delta x_t + \frac{1}{2} K_t^4 \delta x_t^2+ \cdots$. Noting that the variable $x_t = \bar{x}_t + \delta x_t$, i.e., the space variable has an explicit time dependence via the nominal trajectory, it follows that: 
\begin{align} \label{eq.hjb_lhs}
\frac{\partial J(t,x_t)}{\partial t} 
= [\dot{\bar{J}}_t(\bar{x}_t) + \dot{K}_t^1 \delta x_t + \frac{1}{2} \dot{K}_t^2 \delta x_t^2+ \cdots] \nonumber\\
- \dot{\bar{x}}_t[K_t^1 + K_t^2 \delta x_t + \frac{K_t^3}{2} \delta x_t^2 + \cdots],
\end{align}
where, $\dot{\bar{J}}_t(\bar{x}_t), \dot{K}_t^1, \cdots$, are total derivatives with respect to $t$, since they only depend on the time.

Then, using the above expressions, one can express the minimum value of the Hamiltonian in terms of the state perturbations $\delta x_t$ as:
\begin{align} \label{eq.hjb_rhs}
    \min_{u_t} H(x_t,u_t) &= [l(\bar{x}_t) + L_t^1 \delta x_t + \frac{L_t^2}{2} \delta x_t^2+ \cdots] \nonumber\\& 
    - \frac{g^2}{2r} [K_t^1 + K_t^2 \delta x_t + \frac{K_t^3}{2}\delta x_t^2 + \cdots]^2 \nonumber\\
    &+ (f(\bar{x}_t) + F_t^1 \delta x_t + \frac{F_t^2}{2} \delta x_t^2+ \cdots)(K_t^1 \nonumber\\
    & + K_t^2 \delta x_t + \frac{K_t^3}{2}\delta x_t^2 + \cdots).
\end{align}

Next, noting that $\dot{\bar{x}} = f(\bar{x}_t) + g \bar{u}_t$, we obtain the following equations for the evolution of the Taylor co-efficient of the optimal cost function by equating the different powers of $\delta x_t$ on both sides of the stochastic HJB (Eq.~\eqref{eq.stochastic_hjb}) given in Eq.~\eqref{eq.hjb_lhs} and~\eqref{eq.hjb_rhs}.
\begin{align}
    -\dot{\bar{J}}_t &= \bar{l}_t - \frac{g}{r} (\frac{g K_t^1}{2}+r\bar{u}_t)K_t^1 + \epsilon^2 K_t^2, \\
    - \dot{K}_t^1 &= L_t^1 + F_t^1K_t^1 
    - \frac{g}{r}(gK_t^1 + r\bar{u}_t)K_t^2 
    + \epsilon^2 K_t^3, \\
    -\dot{K}_t^2 &= L_t^2 + 2F_t^1 K_t^2 + F_t^2K_t^1 -\frac{g^2}{r}(K_t^2)^2 \nonumber\\
    &-\frac{g}{r}(gK_t^1 + r\bar{u}_t)K_t^3 
    + \epsilon^2 K_t^4,
\end{align}
where we have expanded the first three terms of the expansion in the equations above, and similar expansions may be done for the higher order terms as well.
At this point, we make the following remarks regarding the perturbation expansion above.\\

\begin{remark}
\textbf{Computational Intractability of the Stochastic Problem.} The equations above show that the lower order terms in the stochastic problem are affected by the higher order terms unlike in the deterministic case. Thus, in order to compute the stochastic law, we have to approximate to a high enough order to ensure accuracy in the solution, which in turn implies that the solution of the stochastic problem is very prone to errors. To see this, note that if we were to expand the solution to the $n^{th}$ order, the Taylor co-efficient $K_t^n$  would be affected by the coefficients $K_t^{n+1}$ and $K_t^{n+2}$, and therefore these higher order coefficients would need to be sufficiently small for the resulting solution to be accurate. However, if one approximates to a very high order $n$, quite apart from the obvious curse of dimensionality issue, the resulting system of equations becomes severely ill-conditioned, and consequently, highly sensitive to small errors in the data. Please see our related paper \cite{RL_conv} for the relevant details on this aspect.
\end{remark}
\begin{remark}
\textbf{The Deterministic Problem.} The expressions above also allow us to find the perturbation expansions for the deterministic problem. It is key to note that if the problem considered is deterministic, then $gK_t^1 + r\bar{u}_t=0$ due to the minimum principle, and since $\epsilon = 0$ in the deterministic problem, we obtain the expressions that we derived via the Method of Characteristics in the previous section. The Method of Characteristics is still necessary since it allows us to establish the uniqueness of the optimal nominal trajectory $(\bar{x}_t, \bar{u}_t)$. Thus, the above development can be thought of as an alternative way to derive the perturbation expansion result. Furthermore, we can see that if we are required to derive the cost-to-go of the deterministic policy when applied to the stochastic system, albeit $gK_t^1 + r\bar{u}_t = 0$ due to optimality, nonetheless, there is coupling from the higher order terms due to stochasticity arising from the $O(\epsilon^2)$ terms above, and thus, even this case is intractable to compute. However, since we are interested only in the deterministic feedback law, such a computation is unnecessary.
\end{remark}

\section{\uppercase{NEar Optimality of Shrinking Horizon Model Predictive Control}}\label{sec:4}
In our developments till this point, we have shown that the deterministic feedback law is near-optimal with respect to the optimal stochastic law and that it has a perturbation structure that is lost in the stochastic problem. However, solving the deterministic DP problem is also subject to the Curse of Dimensionality. Nonetheless, owing to the perturbation structure, one can solve the deterministic problem locally (up to the linear feedback term), and then replan at fixed decision time epochs, assuming that the time between the decision epochs is small enough that the local feedback law remains valid in between the epochs. 
Thus, consider a Model Predictive type approach to solving the stochastic control problem. We outline the algorithmic procedure in Algorithm~\ref{algo.MPC-SH} to highlight that our advocated procedure is slightly different from the traditional MPC approach studied in the literature \cite{Mayne_1,Mayne_2}.
\begin{algorithm}[!htbp]
\caption{Shrinking Horizon MPC (MPC-SH)}
\label{algo.MPC-SH}
\textit{Given:} initial state $x_0$, time horizon $T$, cost $c(x,u) = l(x)+ \frac{1}{2}\tr{u}Ru$, terminal cost $c_T(x)$, and decision epoch time $\Delta$.
\begin{algorithmic}[1]
\State Set $N=\frac{T}{\Delta}$, $x_i = x_0$.
\While{$t < N \Delta$} 
\State Solve the open-loop (noise-free) optimal control problem for initial state $x_i$, along with the associated linear perturbation feedback, for the horizon ($N\Delta - t$). Let the perturbation feedback law be denoted by $u(t,x) = \bar{u}_t + K_t \delta x_t$, where $\delta x_t = x_t - \bar{x}_t$ and $(\bar{x}_t, \bar{u}_t)$ is the optimal nominal trajectory.
\State Apply the perturbation feedback law $u(t,x)$ till time $(t+\Delta)$ and observe the state $x_f = x_{(t+\Delta)}$.
\State Set $t = t + \Delta$, $x_i = x_f$.
\EndWhile
\end{algorithmic}
\end{algorithm}

\begin{remark}
In traditional MPC \cite{Mayne_1,Mayne_2}, the horizon $N$ to solve the open-loop problem is fixed. The setting is deterministic, and the necessity of replanning for the problem stems from the assumption that the actual problem horizon is infinite, and therefore, computationally intractable. In lieu, our problem horizon is finite, the repeated replanning takes place over progressively shorter horizons, and the need for replanning arises from the stochasticity of the problem. In particular, note that if the system were really deterministic, there would be no need for replanning.  
\end{remark}
\begin{theorem} \textit{Near-Optimality of MPC-SH.} \label{theorem.MPC}
The MPC feedback policy obtained from the application of the \textit{Shrinking Horizon MPC} algorithm is near-optimal to $O(\epsilon^4)$ to the optimal stochastic feedback policy for the stochastic system \eqref{eq:model}.
\end{theorem}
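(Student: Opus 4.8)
The plan is to chain together the three results established in Section~\ref{sec:3} to bound the suboptimality of the MPC feedback law against the true stochastic optimal policy. The crucial observation is that the Shrinking Horizon MPC algorithm, by construction, implements exactly the ``deterministic policy applied to the stochastic system'' studied in Proposition~\ref{prop_e4}: at each state $x_i$ and remaining horizon $H$, step (1) solves the noise-free open-loop problem, and step (2) applies only the first control before re-solving. Thus the feedback law induced by the algorithm is precisely $u_t^d(\cdot)$, the optimal deterministic feedback, whose cost-to-go on the stochastic system is $\varphi_t(\cdot)$.

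\textbf{Main steps in order.}
First, I would verify that the MPC feedback law is well-defined and coincides with the deterministic optimal feedback $u_t^d$. This requires Proposition~\ref{OL_optimality}: because the open-loop problem solved at each step has a \emph{unique global minimum} characterized by the Minimum Principle, the control $u_0$ returned in step (1) is unambiguous, and by the principle of optimality for the deterministic problem, re-solving from the observed state $x_n$ over horizon $H-1$ reproduces the same feedback map $u_t^d(x)$ that solves the deterministic DP equation \eqref{detDP}. Without the global-optimality guarantee, the open-loop solver could return a spurious local minimum and the identification with $u_t^d$ would fail; this is why that proposition is an essential leg of the argument. Second, I would invoke Proposition~\ref{prop_e4} directly: the cost-to-go of this deterministic feedback law on the stochastic system, $\varphi_t(x)$, and the cost-to-go of the true stochastic optimal policy, $J_t(x)$, admit perturbation expansions in $\epsilon^2$ that agree through the $O(\epsilon^2)$ term, i.e. $\varphi_t^0 = J_t^0$ and $\varphi_t^1 = J_t^1$. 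Consequently their difference satisfies
\begin{equation*}
\varphi_t(x) - J_t(x) = \epsilon^4\bigl(\varphi_t^2(x) - J_t^2(x)\bigr) + o(\epsilon^4),
\end{equation*}
which is $O(\epsilon^4)$. Since $J_t$ is by definition the optimal (minimal) cost, $\varphi_t(x) \ge J_t(x)$, and the MPC policy is therefore near-optimal to $O(\epsilon^4)$, establishing the theorem.

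\textbf{The main obstacle.}
I expect the delicate point to be the first step: rigorously justifying that the \emph{receding-horizon, replanned} feedback law generated by the algorithm is identical to the stationary deterministic optimal feedback $u_t^d$ appearing in Proposition~\ref{prop_e4}. The subtlety is that Proposition~\ref{prop_e4} fixes $u_t^d(\cdot)$ as the solution of the deterministic DP equation \eqref{detDP} over a \emph{single} horizon, whereas the algorithm re-solves an open-loop problem from each newly observed stochastic state. One must argue that, for the deterministic problem, the first control of the open-loop optimum from state $x$ with remaining horizon $H$ equals the DP feedback $u_t^d(x)$ -- this is the principle of optimality, but it relies on the uniqueness supplied by Proposition~\ref{OL_optimality} to rule out inconsistent selections across successive re-solves. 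Once this identification is secured, the remainder is an immediate application of Proposition~\ref{prop_e4}, so the conceptual weight of the theorem rests almost entirely on correctly connecting the algorithmic feedback to the object analyzed in the closeness result.
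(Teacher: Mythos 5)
Your proposal is correct and follows essentially the same route as the paper's proof: identify the MPC law with the optimal deterministic feedback via the uniqueness and global optimality of the open-loop solution (Proposition~\ref{OL_optimality}), then invoke Proposition~\ref{prop_e4} to conclude that $\varphi_t$ and $J_t$ agree through the $\epsilon^2$ term, leaving an $O(\epsilon^4)$ gap. Your discussion of the receding-horizon identification is a more careful spelling-out of the step the paper states in one sentence, but it is the same argument.
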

\begin{proof}
We know that $J^0 (t, x) = \varphi^0 (t,x)$, and $J^1(t,x) = \varphi^1(t,x)$ from Proposition \ref{prop_e4}, for all $(t,x)$. Owing to the uniqueness and global optimality of the open-loop from Proposition \ref{OL_optimality}, it follows that the nominal control sequence, and the associated linear perturbation feedback law, found by the MPC procedure outlined above coincides locally with the optimal deterministic feedback law given any state $x$ and any time $t$. Therefore, the result follows.
\end{proof}
Note that the proof above also shows that the MPC-SH procedure provides the optimal deterministic feedback law which is stated in the following corollary.
\begin{corollary}
    The MPC-SH algorithm provides the optimal deterministic feedback law given any initial condition. 
\end{corollary}

Further, note that the linear deterministic perturbation feedback is also near-optimal to fourth order owing to the fact that the terms till $O(\epsilon^2)$ are solely due to the linear part of the perturbation feedback (see Remark \ref{remark:linear_fb_nearoptimality}). This leads to the following result.
\begin{corollary}
    The deterministic linear perturbation feedback policy (T-PFC): $\pi_t^{d,l}(x_t) = \bar{u}_t + K_t \delta x_t $ where $K_t$ is defined in Eq. \eqref{T-PFC-K}, is near optimal to fourth order in $\epsilon$ for the stochastic system \eqref{eq:model}.
\end{corollary}

The results above establish that MPC-SH and the deterministic linear perturbation feedback are good approximations to the optimal stochastic policy for low noise levels. Now, we examine two particularly important consequences of the results above for Stochastic MPC and RL.

\textbf{Stochastic MPC}. {
A major computational bottleneck with stochastic MPC \cite{Mayne_1}, is that the MPC search needs to be over (time-varying) feedback policies rather than control sequences owing to the stochasticity of the problem, which leads to an intractable optimization for nonlinear systems. 
Because of this intractability, most of the work in stochastic MPC deal with linear systems using stochastic tube approach \cite{smpc_mesbah2016, smpc_heirung2018}. 
However, as our results demonstrate, the MPC feedback law we propose (MPC-SH), at least in the absence of constraints, is near-optimal to the fourth order. Further, as we have shown analytically in Section~\ref{sec.3D} and as will be seen from our empirical results in Section~\ref{section:results}, in practice, the solution of the stochastic DP problem, even in the absence of constraints, is highly sensitive to noise, and there is no notion of a local (tube) solution in the nonlinear case, quite apart from the usual issue of dimensionality, and MPC-SH gives much better performance than the solution of the stochastic DP problem.
A further important practical consequence of Theorem 1 is that we can get performance comparable to MPC (see Section 5.2), using T-PFC, and replanning the nominal sequence only when needed, similar to the event driven MPC philosophy \cite{ETMPC1,ETMPC2}.\\
We note here that there is very recent work from the MPC literature that utilizes the same shrinking horizon approach \cite{Diehl-e4} and proves a similar fourth order near optimality result. 

\textbf{Reinforcement Learning.} The problems considered in reinforcement learning can be construed as one of finding the optimal feedback policy for a stochastic nonlinear dynamical system \cite{bertsekas1}. Typically, this is done via simulations or rollouts of the dynamical system of interest, which allied with a suitable function approximator such as a (deep) neural net, yields a nonlinear feedback policy. However, these methods tend to be highly data intensive, slow to converge, and suffer from extremely high variance in the solution since they try to solve the DP equation \cite{RL_conv}. This is a manifestation of the inherent curse of dimensionality in trying to solve the stochastic DP problem. In fact, it is much easier to repeatedly solve the open-loop problem as prescribed by MPC. Of course, there remains the problem of whether we can solve the open-loop problem online. In our opinion, this is feasible today, when allied with efficient computational algorithms like ILQR \cite{ILQG_tassa2012synthesis} with suitable modifications. 
In fact, this is the subject of the second part of this paper on data-based control \cite{wang2022search}.
\section{Empirical Results}\label{section:results}
This section will show evidence for theoretical results derived previously through computational experiments. In subsection~\ref{sec.5A}, the inaccuracy of the stochastic solution, as discussed in Remark 6, will be shown for a simple 1-D problem in comparison with the deterministic solution. The near-optimality of MPC-SH, which was theoretically shown to be the optimal deterministic solution in Theorem~\ref{theorem.MPC}, will also be compared with the stochastic solution in a nonlinear problem. Further, we will show why it is intractable to solve the stochastic HJB accurately. In Subsection~\ref{sec.5B}, the performance of using the optimal linear perturbation feedback derived in Section~\ref{sec:MOC} will be compared with MPC-SH on nonlinear robotic problems. The experiments shown in this section are carried out over 500 Monte Carlo simulations, and the performance statistics are computed from these simulations.  

\subsection{Deterministic vs. Stochastic policy} \label{sec.5A}
In this section, we aim to show, through simulations, that computing the optimal stochastic feedback law is subject to errors, as explained by the theory discussed previously in Sec.~\ref{sec.3D}. We show this by comparing the performance of the deterministic solution applied to the stochastic problem and the stochastic solution in a linear and a nonlinear problem. We consider the following problem:
\begin{subequations}
\begin{align}
    &J(0,x_0) = \nonumber\\ 
    & \min_{\{u_t\}} \Exp{}{ \frac{1}{2}  \left( \int_{0}^{T} (qx_t^2 + ru_t^2) dt 
                 + q_T x_T^2 \right)}  \\
    &\text{s.t.}~ dx = (f(x) + g(x)u)dt + \epsilon dw, ~\text{given}~ x_0. 
\end{align}
\end{subequations}
The solution to the above problem is calculated by solving the HJB equation (written for the scalar case):
\begin{equation}
    -\frac{\partial J}{\partial t} = \frac{1}{2}  \Big(qx^2 - \frac{g(x)^2}{r}  \Big(\frac{\partial J}{\partial x}\Big)^2\Big) + f(x) \frac{\partial J}{\partial x} + \frac{\epsilon^2 Q}{2} \frac{\partial^2 J}{\partial x^2}, ~\label{eq.hjb_stochastic}
\end{equation}
where, $J = J(t,x)$ is the expected cost-to-go from state $x$ at time $t$, with terminal condition $J(T,x) = \frac{1}{2} q_T x^2$. The minimizing optimal control $u = -\frac{1}{r} g(x) \frac{\partial J}{\partial x}$ and we take $q = 100$, $q_T = 500$  and $r =1$. The noise $w$ added to the system in stochastic cases is zero mean Gaussian white noise, with standard deviation being the maximum value of the control input obtained from the nominal trajectory by solving the deterministic problem - $(\sqrt{Q}= \bar{u}_{max})$. The HJB equation in \eqref{eq.hjb_stochastic} is solved by the finite difference (FD) method in a fixed domain since it is the method of choice for solving advection-diffusion PDEs, which \eqref{eq.hjb_stochastic} is, in the computation fluid dynamics community \cite{FD_CFD}. The parameters used in FD are shown in Table~\ref{table:fd_parameters}. The time and space discretization was chosen to satisfy the Courant–Friedrichs–Lewy (CFL) conditions \cite{CFL}. We consider only a 1-D problem for the sake of easy illustration since \eqref{eq.hjb_stochastic} becomes computationally intractable to solve for high-dimensional problems; nevertheless, these simple low dimensional problems clearly illustrate the issues with solving the stochastic HJB equation.

\begin{table}[!htbp] 
\centering
\begin{tabular}{|c|c|c|c|}
    \hline 
    Domain & $\Delta x$ & $\Delta t$  & $T$\\
    \hline
    $[-2, 2]$ & $0.02 $& $3.33\times 10^{-6}$  & 1\\
     \hline
\end{tabular}
\caption{Parameters used in finite difference solution to HJB PDE.}
\label{table:fd_parameters}
\end{table}

\textbf{Linear Quadratic case:}  
The linear quadratic case is specifically chosen since we know the optimal solution for the stochastic optimal control problem. We consider the following linear system 
\begin{align}
    dx = (x + u)dt + \epsilon dw, ~\text{given}~ x_0 = 1. \label{eq.1d_dynamics}
\end{align}
In the linear quadratic Gaussian case, we know that the optimal solution for the stochastic problem and the deterministic problem ($\epsilon = 0$) are the same and are given by solving the Riccati equation governing the Hessian matrix of the cost-to-go function. The resulting control policy is the well-known linear quadratic regulator (LQR). In order to validate the accuracy of the HJB finite difference (HJB-FD) solution, we use LQR as a comparison since it is the optimal solution for this case and can be computed accurately. Note that the stochastic HJB is still a second-order nonlinear PDE, even in the linear-quadratic case. 


In our first experiment, we solve the HJB equation in~\eqref{eq.hjb_stochastic} for a particular value of $\epsilon$ in the domain $[-2,2]$. We use the obtained feedback policy $u = -\frac{1}{r} \frac{\partial J}{\partial x}$, and apply it to the linear system given in \eqref{eq.1d_dynamics} using the same $\epsilon$ value the HJB was solved for, to regulate the noise acting on the system. The linear system is simulated under this feedback policy for the initial condition $x_0 = 1$, for a time interval of $[0,1]$. We do Monte Carlo simulations of the system for different noise samples of $w$ and obtain the mean and standard deviation of the cost incurred by the system over these experiments. We do the same experiment for LQR with the LQR's feedback policy $u_t = - K^{lqr}_t x_t $ and find the mean and standard deviation of the cost incurred by the system from the given initial state. We repeat the experiment for different values of $\epsilon$ and present the results in Fig.~\ref{fig.1d_linear}.  From Fig.~\ref{fig.1d_linear}, it is clear that both the solutions only match for low noise cases, and in the high noise case, the stochastic HJB-FD solution is highly inaccurate. 

\begin{figure}[!htbp]
    
    \subfloat[Full noise spectrum.]{\includegraphics[width=0.5\columnwidth]{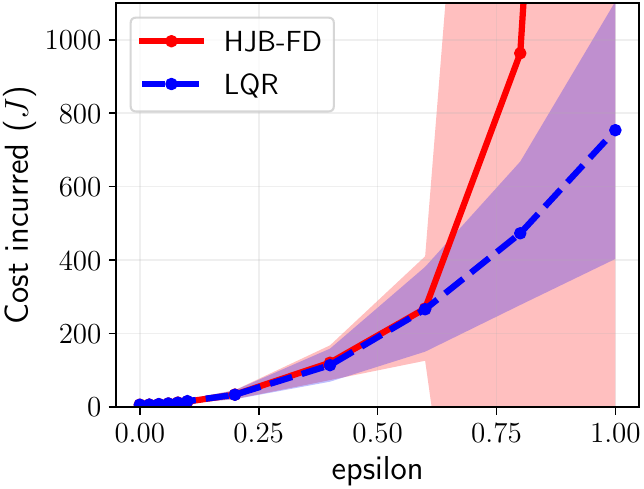}}
    \hfill
    \subfloat[Low noise region enhanced.]{\includegraphics[width=0.48\columnwidth]{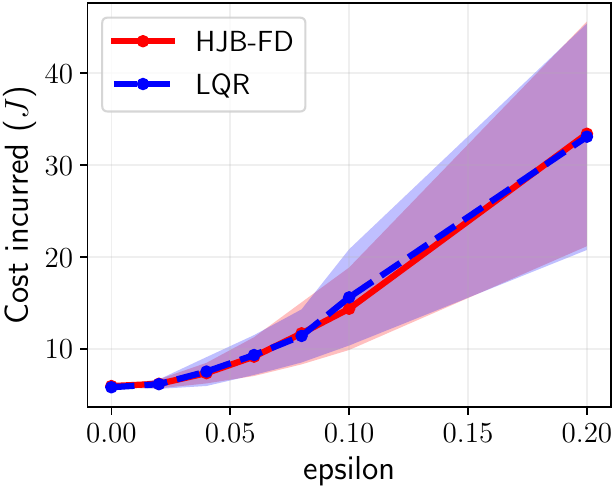}}
    \caption{Performance comparison of the solution to HJB (Eq.~\eqref{eq.hjb_stochastic}) obtained from finite difference scheme and LQR on the linear system. The mean and standard deviation of the cost incurred by the system are calculated from experiments for different cases of $\epsilon$. The data shown is from simulating the linear system from initial condition of $x_0 = 1$.}
    \label{fig.1d_linear}
\end{figure}

In the following, we show the reason behind the inaccuracy of the HJB-FD solution. We first compare the expected cost-to-go values between the two methods for different cases of $\epsilon$. The expected cost-to-go for HJB is directly available from the solution to \eqref{eq.hjb_stochastic} over the entire domain. For LQR, we can calculate the expected cost-to-go for the stochastic case by solving the following standard equations (written specifically for our linear system in \eqref{eq.1d_dynamics}):
\begin{align}
    J_{lqr}(t,x) = \frac{1}{2} p_t x^2 + v_t, \\
    -\dot{p}_t = 2p_t - \frac{p_t^2}{r} + q, ~p_T = q_T, \label{eq.lqr_p} \\
    -\dot{v}_t = p_t (\epsilon \sigma)^2, ~v_T = 0. 
\end{align}
We show, in the left column of Fig.~\ref{fig.CTG_slices}, the cost-to-go function of the HJB solution and the LQR solution in our domain at a specific time instant. We observe that the HJB-FD cost-to-go exactly matches the LQR for $\epsilon = 0$ and $0.2 $. For $\epsilon = 1$, it can be seen that the cost-to-go values don't match and we can infer that the HJB-FD solution is inaccurate. Since the feedback policy depends on the cost-to-go function, it will also be inaccurate in such cases. 

The reason for the inaccuracy of the stochastic HJB-FD solution is illustrated in the right column of Fig.~\ref{fig.CTG_slices}. The plots in the right column show the trajectories taken by the system under the HJB-FD feedback policy for different values of $\epsilon$. When the $\epsilon$ value parametrizing the strength of the noise becomes large, it can be seen that the trajectories leave the domain on which the solution is obtained, due to the noise acting on the system. Since the cost-to-go solution is unavailable outside the domain, one has to approximate the cost of these trajectories with the cost at the boundary. To get an accurate solution, the domain one has to solve needs to expand with time. Since most computational methods do a fixed domain approximation, the stochastic solution obtained will inherently be inaccurate because the states inside the boundary need the cost-to-go values of states outside the domain as the noise intensity increases. In the deterministic case, owing to the absence of noise, the control takes the system towards the origin and not outside the domain. So, the deterministic cost-to-go and feedback policy is always accurate. Furthermore, note that when the stochastic HJB solution is accurate, the system does not leave the domain owing to the control dominating the effect of the noise.

\begin{figure}[!htbp]
    \centering
   \subfloat[ $\epsilon = 0$]{\includegraphics[width=.49\linewidth]{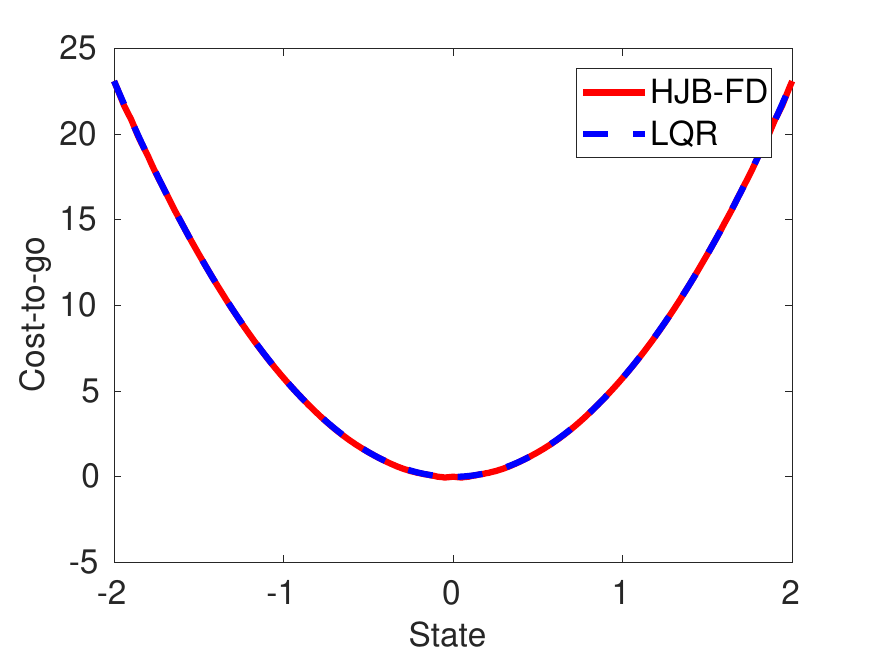}}
   \subfloat[$\epsilon = 0$]{\includegraphics[width=.49\linewidth]{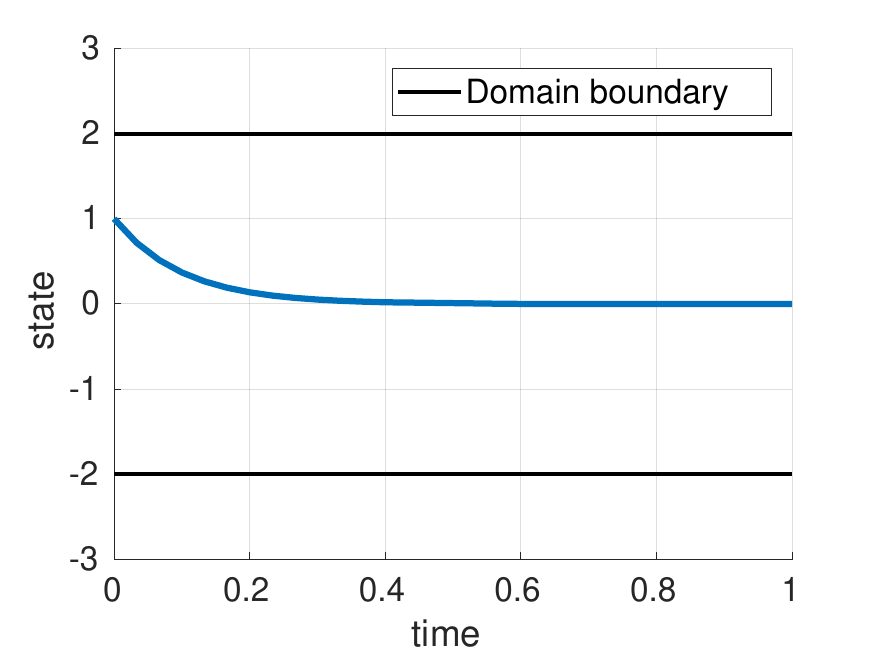}}
   \newline
    \subfloat[$\epsilon = 0.2$] {\includegraphics[width=0.49\linewidth]{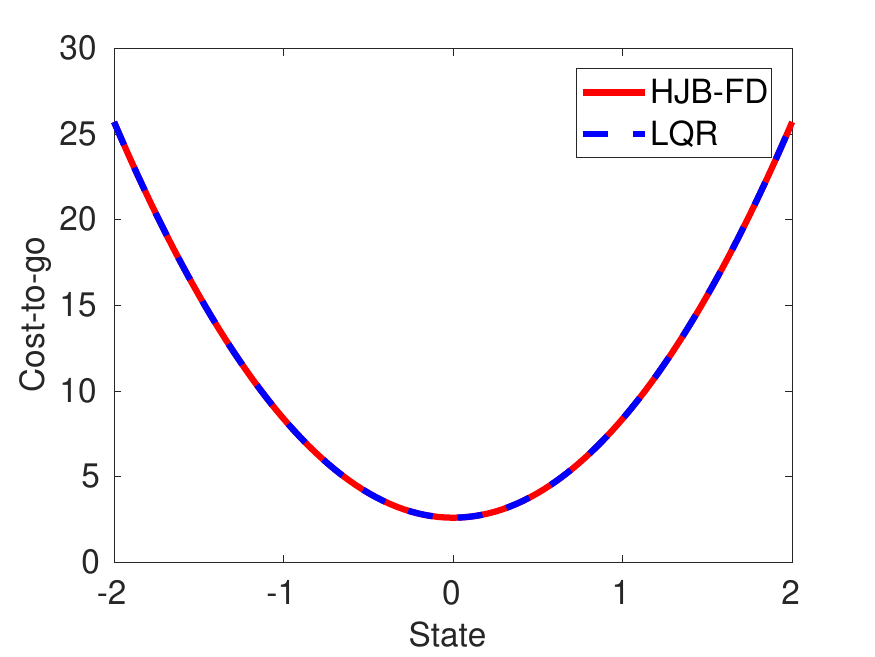}}%
    \subfloat[$\epsilon = 0.2$]{\includegraphics[width=.49\linewidth]{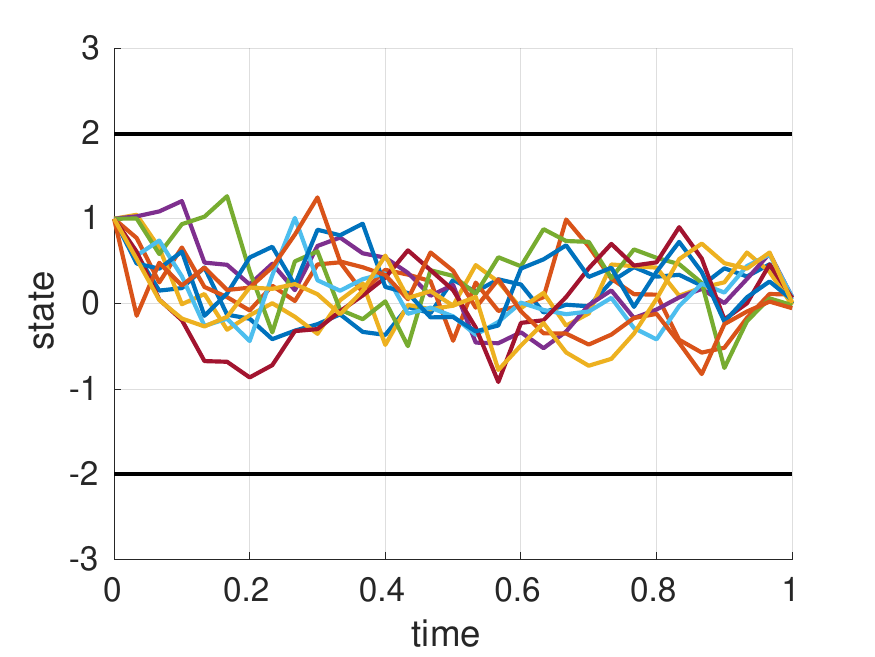}}
    \newline
    \subfloat[$\epsilon = 1$] {\includegraphics[width=0.49\linewidth]{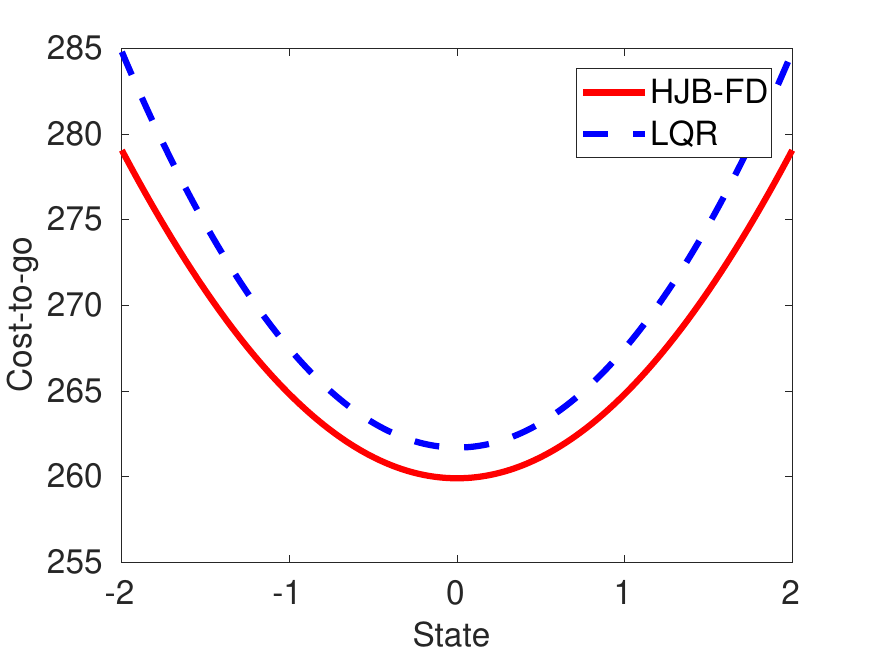}}%
    \subfloat[$\epsilon = 1$]{\includegraphics[width=.49\linewidth]{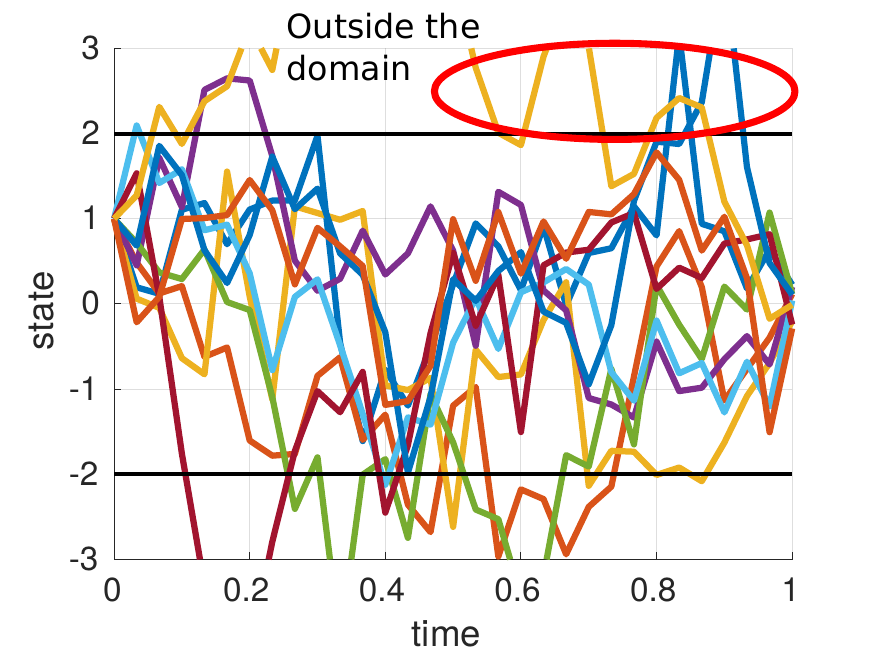}}
    \caption{Left column: Comparison of expected cost-to-go value from the HJB-FD solution and the LQR Riccati solution at $t = 0.81$. The plot shows that the HJB-FD cost-to-go doesn't match the LQR cost-to-go (which is the optimal) at high noise levels. \\
    Right column:
    Sample trajectories of the linear system at different noise levels under the policy computed by HJB-FD. Since the trajectory could leave the domain in high noise cases, the expected cost-to-go calculated in seeking the stochastic feedback policy will be inaccurate. (The trajectories were generated with the original sampling time of the FD solver, but the data is plotted at a larger sampling interval for the sake of clarity.) }
    \label{fig.CTG_slices}
\end{figure}

\textbf{Nonlinear Case.}
We consider the nonlinear system $dx  = (-cos(x) + u)dt + \epsilon dw $ with initial condition $x_0 = 1$.
As discussed in Sec.~\ref{sec:4}, MPC-SH feedback law is the optimal feedback law for the deterministic problem and the cost is $O(\epsilon^4)$ near-optimal to the stochastic cost. The algorithm for MPC-SH is given in Algorithm 1. To solve the open-loop optimization problem in MPC-SH, the iterative linear quadratic regulator (ILQR) algorithm is used \cite{ILQG_tassa2012synthesis}. ILQR is used specifically since the converged optimal solution satisfies the necessary conditions of the minimum principle given in Eqs.~\eqref{L-C-1},~\eqref{L-C-2}. As discussed in Proposition~\ref{OL_optimality}, the deterministic open-loop problem has a unique minimum for our case, and ILQR will guarantee convergence to it \cite{wang2022search}.  

The HJB equation is solved using FD, as discussed in the linear-quadratic case. The open-loop optimization in MPC-SH is solved using ILQR as discussed above for the specific initial condition and tested on the stochastic nonlinear system for a value of $\epsilon$. The experiment is repeated for different noise levels by varying $\epsilon$. The decision epoch time chosen for MPC-SH was $\Delta = 0.005$, approximately $1000 \times$ the $\Delta t$ used in FD. The mean and standard deviation of the cost incurred in these experiments are tabulated in Fig.~\ref{fig.1d_mpc}. 
Fig.~\ref{fig.1d_mpc} shows that the MPC-SH feedback law has comparable performance with the stochastic HJB-FD solution. MPC-SH is also computationally more efficient to solve, as HJB-FD requires very fine time discretization to solve without numerical issues even for the 1-D case owing to the CFL conditions (see table~\ref{table:fd_parameters}). \nxx{Also, MPC-SH finds an optimal trajectory for a single initial condition as opposed to HJB-FD which finds all solutions over the entire domain, which is computationally expensive.}
\begin{figure}[!htbp]
    \centering
   \subfloat[Full noise spectrum.]{\includegraphics[width=.5\linewidth]{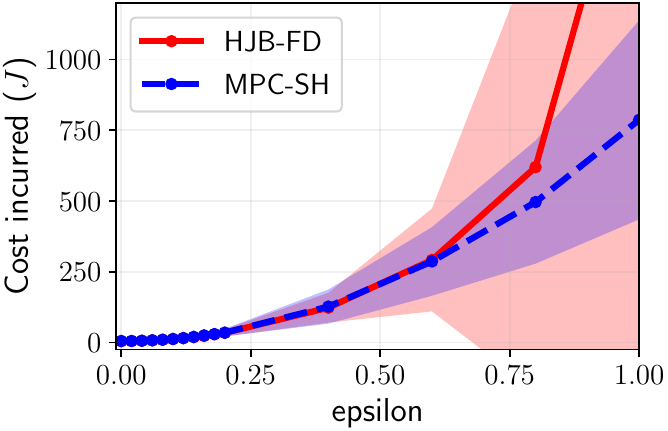}}
   \hfill
    \subfloat[Low noise region enhanced.] {\includegraphics[width=0.48\linewidth]{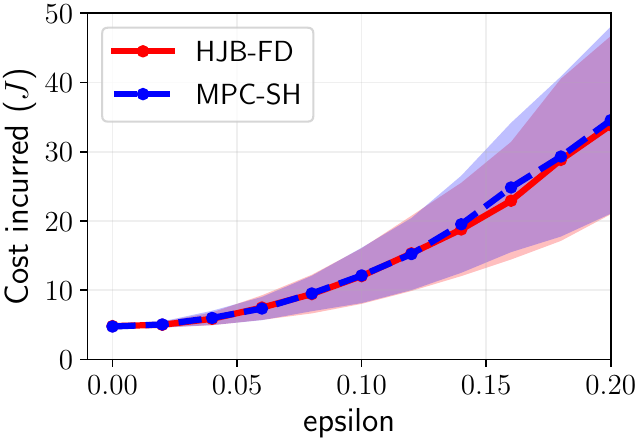}}%
    \caption{Performance comparison of HJB-FD and MPC-SH on the 1-D nonlinear system for different noise levels.}
    \label{fig.1d_mpc}
\end{figure}
\begin{figure}[!htbp]
    \centering
   \subfloat[HJB-FD cost-to-go.]{\includegraphics[width=.5\linewidth]{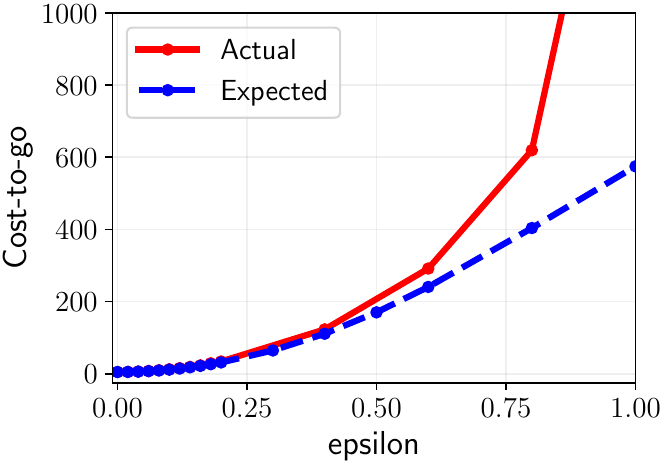}\label{fig.HJB_FD_expected_subplot}}
   \subfloat[MPC-SH $\epsilon = 0.8 $]{\includegraphics[width=.5\columnwidth]{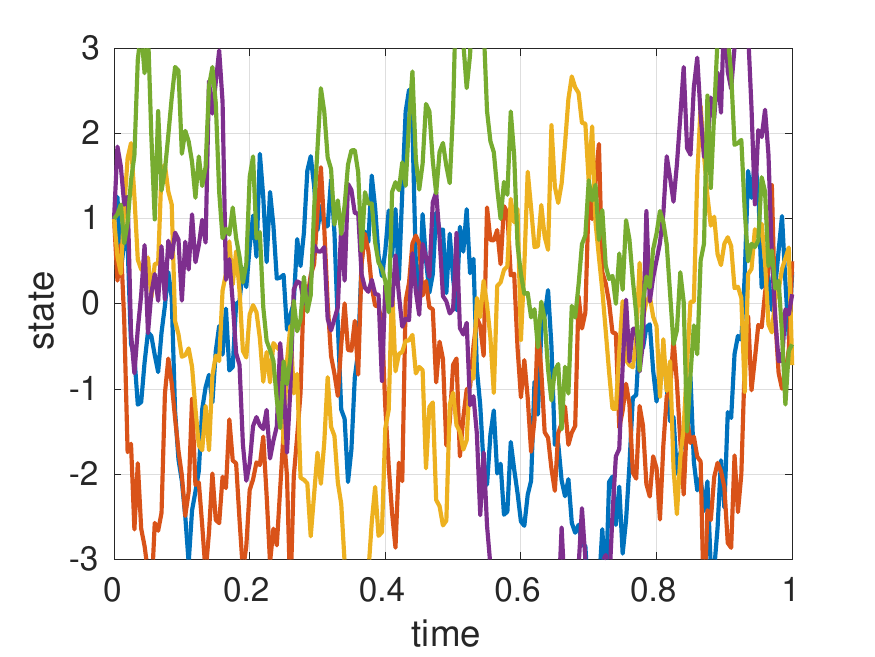} \label{fig.mpc_1dcos_traj8}}
   \newline
   \subfloat[HJB-FD $\epsilon = 0.2 $]{\includegraphics[width=.5\columnwidth]{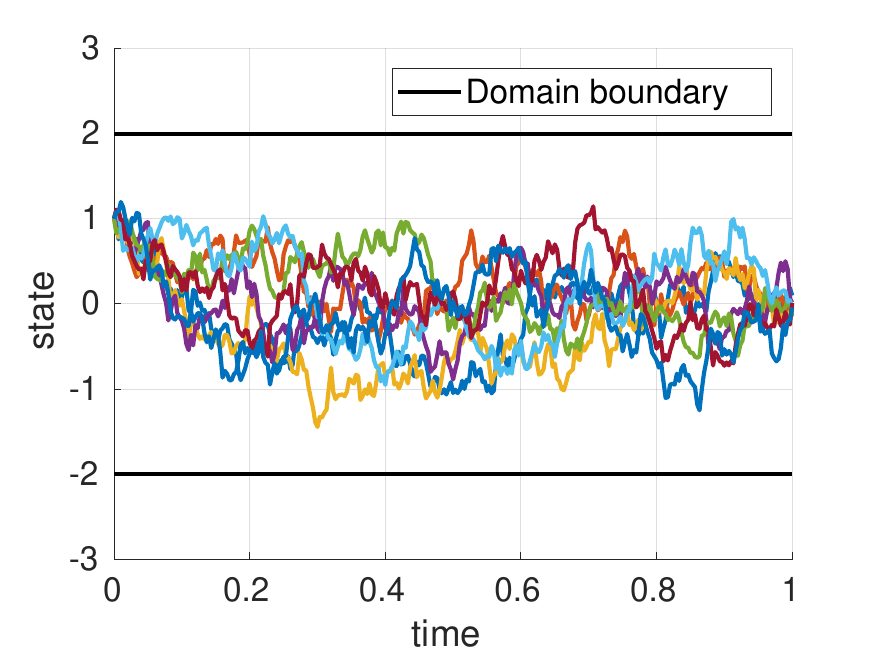} \label{fig.1dcos_traj2}}
   \subfloat[HJB-FD $\epsilon = 0.8 $]{\includegraphics[width=.5\columnwidth]{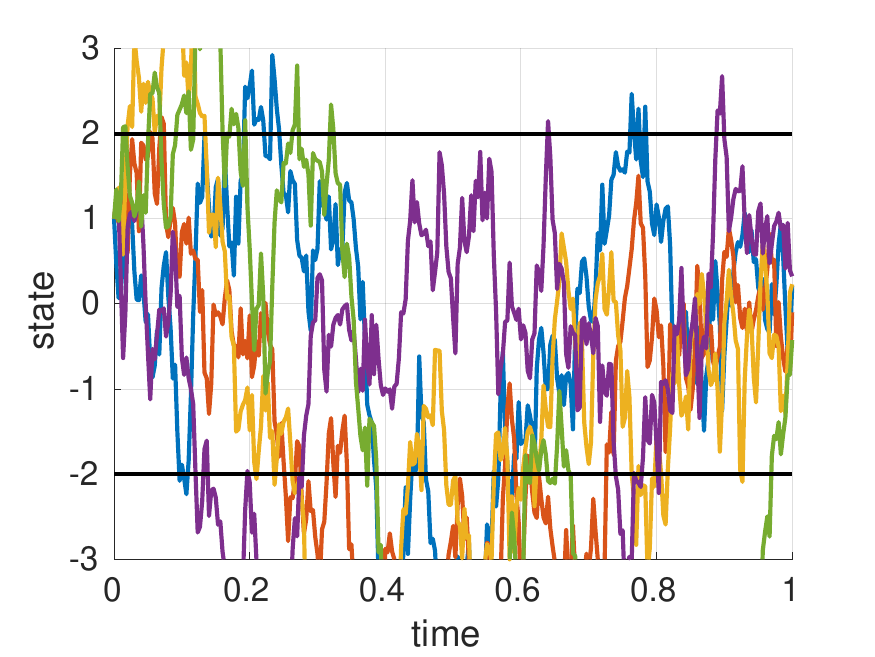} \label{fig.1dcos_traj8}}
    \caption{(a) Comparison of the expected cost-to-go obtained from the HJB-FD solution and the actual cost incurred by applying the HJB-FD feedback policy on the nonlinear system. The cost-to-go is obtained for the initial condition $x_0 = 1$ and the actual cost is the average cost of 500 simulations. Trajectory samples of the nonlinear system under the MPC-SH policy are shown in (b), and under HJB-FD policy for two different cases of $\epsilon$ are shown in (c) and (d).}
    \label{fig.HJB_FD_expected}
\end{figure}

Even when the deterministic solution, which MPC-SH is, is applied to the stochastic case, the performance is almost equivalent, due to the $O(\epsilon^4)$ near-optimality of the deterministic solution to the stochastic. Moreover, the stochastic policy has higher variance than the deterministic MPC-SH policy at $\epsilon = 0.8$, and fails after that - another case that shows that the calculated stochastic policy is inaccurate. To illustrate the inaccuracy in the HJB-FD solution, we compare the expected cost-to-go value calculated by solving the HJB with the true cost of operation in Fig.~\ref{fig.HJB_FD_expected_subplot}. It can be seen that the cost-to-go becomes inaccurate after $\epsilon = 0.6$. As discussed earlier, one has to expand the domain to calculate the cost-to-go of the stochastic problem accurately. Expanding the domain makes the problem more computationally expensive, and trajectories will still leave the domain in high noise cases. In contrast, MPC-SH does not face the issue of computational inaccuracy when a trajectory exits the boundary since it can compute a new trajectory from any given state without worrying about the boundary and the boundary conditions as required by HJB-FD. In particular, this may be construed as the primary computational benefit of using the MPC-SH approach.
%
\subsection{Comparison between MPC-SH and T-PFC} \label{sec.5B}
In this section, we will show the comparison in performance of two different deterministic feedback laws: the optimal linear feedback and the MPC-SH feedback law. In Remark~\ref{remark:linear_fb_nearoptimality}, it was shown that the optimal linear feedback controller given by Eqs.~\eqref{T-PFC-G}-\eqref{T-PFC-K}, designed around the optimal open-loop nominal trajectory is also near-optimal to the order of $O(\epsilon^4)$ to the stochastic system. This design is referred to as the trajectory-optimized perturbation feedback controller (T-PFC) \cite{parunandi2019TPFC}. The difference between T-PFC and MPC-SH is that, T-PFC plans the nominal trajectory only once, from the initial state, and uses the linear feedback to correct for errors during its execution. While, MPC-SH replans the nominal trajectory from the current state continuously and uses the linear feedback only for a short interval $\Delta$ between the replans. The advantage of using T-PFC is that the open-loop optimization has to be carried out only once (preferably offline), and the precomputed linear feedback gains can be used to correct for deviations due to uncertainty online. In a stochastic setting, this optimal nominal trajectory generated initially is only optimal if the system stays close to the nominal. If it deviates, the trajectory has to be replanned from the current state as done by MPC-SH to maintain optimal performance. We will examine how the performance of T-PFC compares with MPC-SH in nonlinear robotics problems, namely the car-like robot and cart-pole system, for different noise levels in Fig.~\ref{fig.cost_robotic}. 

In Fig.~\ref{fig.cost_robotic}, we see that T-PFC shows comparable performance to MPC-SH for low values of $\epsilon$. As noise increases, the trajectory deviates from the nominal computed initially, and the feedback policy is no longer optimal, necessitating the need for a replanned nominal trajectory from the current state. Hence, the performance of T-PFC deteriorates for high noise levels. Nevertheless, there is value for T-PFC-like deterministic feedback laws in applications that wish to minimize onboard computing and act in low-noise settings.

\begin{figure}[!htbp]
    \centering
   \subfloat[Car-like robot]{\includegraphics[width=.49\linewidth]{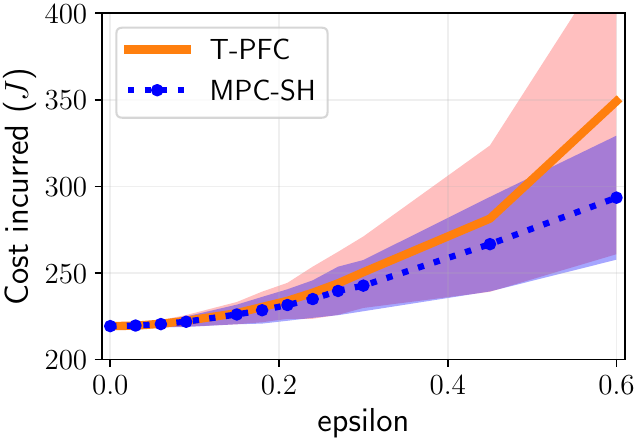} \label{cost_comp}}
    \subfloat[Cartpole] {\includegraphics[width=0.50\linewidth]{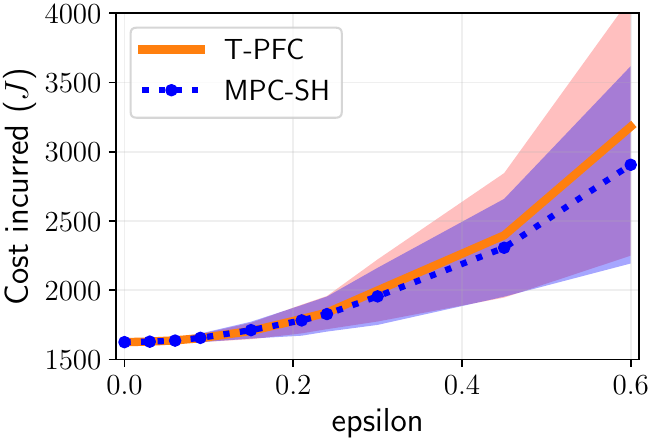}}%
    \caption{Performance comparison of T-PFC with MPC-SH in nonlinear robotics systems. Both policies are computed for a specific initial condition and tested on 500 different samples for each value of $\epsilon$ to find the cost statistics. \nxx{The car-like robot considered is a 4-D system and is governed by the equations $\dot{x} = v cos\theta$, $\dot{y} = v sin\theta$, $\dot{\theta} = \frac{v}{L}tan \phi$, $\dot{\phi}= \omega$, where $v, \omega$ are the control inputs and $L$ is the length of the car. The cart-pole is also a 4-D system and is governed by $(M+m)\ddot{x} - mL \dot{\theta}^2 sin\theta  + mL\ddot{\theta}  cos \theta = F$, $mL^2\ddot{\theta} + mL\ddot{x} cos\theta  +mgL sin\theta = 0$, where $F$ is the control input, and $M, m, L$ are the mass of the cart, mass of the pole and length of the pole. Process noise was added to the above systems after propagating the dynamics at every time step. The standard deviation of the noise added was the maximum value of the states in the optimal nominal trajectory.}}
    \label{fig.cost_robotic}
\end{figure}

\subsection{Discussion} \label{sec.discussion}
The primary takeaway from Section~\ref{sec.5A} and \ref{sec.5B} is that deterministic policies are not only near-optimal but also accurate, scalable, and repeatable. It is not possible to compute the stochastic policy accurately, as shown in Sec.~\ref{sec.5A}. Note that the inaccuracy is not a limitation of the finite difference method used. Other PDE solution techniques like the Finite Element method and pseudo-spectral methods are also solved on a bounded domain, and consequently, not immune to the errors observed in FD. As discussed in Sec.~\ref{sec:4}, random sampling-based methods like approximate dynamic programming, and reinforcement learning are dependent on their samples to explore the domain and inherently have the same issue in the stochastic case. In high dimensions problems, one needs a prohibitively large number of samples to explore the domain. An inefficient sampling of the domain will lead to inaccurate policies as the cost-to-go is not accurately captured by the samples. Due to this issue, there is an inherent variance in the solution obtained by such methods \cite{RL_conv}.
We have done an exhaustive investigation comparing the deterministic feedback approach with other RL methods in the companion paper \cite{wang2022search}, where we report the accuracy, scalability, efficiency, and repeatability of the deterministic policy that the stochastic RL methods lack. 
To summarize, as shown in Fig.~\ref{fig.HJB_FD_expected}, the regime where the stochastic solutions can be computed accurately is the one of low noise where the deterministic solution gives near-identical performance, and consequently, in practice, the deterministic feedback law implemented via MPC-SH is sufficient. 
\section{{Conclusion}}
\label{section:conclusion}

In this paper, we have considered the problem of stochastic nonlinear control. We have shown that recursively solving the deterministic optimal control problem from the current state, \`a la MPC, results in a near-optimum policy to fourth order in a small noise parameter, and in practice, empirical evidence shows that the MPC law performs better than the law obtained by computationally solving the stochastic DP problem owing to the perturbation structure of the deterministic optimal control problem. An important limitation of the current work is the smoothness of the HJB solution such that suitable Taylor expansions are possible which may break down in the presence of constraints and one then needs the generalized concept of a viscosity solution \cite{HJB_viscosity}. It remains to be seen if, and how, one may extend the results presented here to such applications.  Also, a careful investigation into the relative merits and demerits of the shrinking horizon approach to MPC when compared to the traditional fixed horizon approach is required for nonlinear problems, as is the generalization to the practical and important partially observed problem.



\begin{appendix}
\setcounter{section}{0}
\section{DETAILED PROOFS OF RESULTS}


\textbf{Proof of Lemma 1.}\\
\begin{proof} For ease of notation, we prove the scalar case, the vector case is a straightforward generalization. We proceed by induction. The first general instance of the recursion occurs at $k=3$.
It can be shown that: 
 \begin{align*}
 \delta x_3 &= \underbrace{(\bar{A}_2\bar{A}_1(\epsilon w_0 \sqrt{\Delta t}) + \bar{A}_2 (\epsilon w_1 \sqrt{\Delta t}) + \epsilon w_2 \sqrt{\Delta t})}_{\delta x_3^l} + e_3,\\
 e_3 &= \bar{A}_2 \bar{S}_1(\epsilon w_0\sqrt{\Delta t}) + \bar{S}_2(\bar{A}_1(\epsilon w_0\sqrt{\Delta t}) + \epsilon w_1\sqrt{\Delta t} +\\
 & \bar{S}_1(\epsilon w_0\sqrt{\Delta t})).    
 \end{align*}
 Noting that $\bar{S}_1(.)$ and $\bar{S}_2(.)$ are second and higher order terms, it follows that $e_3$ is $O(\epsilon^2)$. \\
 Suppose now that $\delta x_k = \delta x_k^l + e_k$ where $e_k$ is $O(\epsilon^2)$. Then:
 $
 \delta x_{k+1} = \bar{A}_{k}(\delta x_k^l + e_k) + \epsilon w_k \sqrt{\Delta t} + \bar{S}_{k}(\delta x_k), \\
 = \underbrace{(\bar{A}_{k} \delta x_k^l + \epsilon w_k \sqrt{\Delta t})}_{\delta x_{k+1}^l} +\underbrace{\{\bar{A}_{k}e_k + \bar{S}_{k}(\delta x_k)\}}_{e_{k+1}}.$ 
 Noting that $\bar{S}_{k}$ is $O(\epsilon^2)$ and that $e_{k}$ is $O(\epsilon^2)$ by assumption, the result follows that $e_{k+1}$ is $O(\epsilon^2)$. \\
 Now, let us take a closer look at the term $e_k$ and again proceed by induction. It is clear that $e_1 = e_1^{(2)} =0$. Next, it can be seen that $e_2 = \bar{A}_1e_1^{(2)} + \tr{\delta x_1^{l}}\bar{S}_1^{(2)}\delta x_1^l + O(\epsilon^3) = (\epsilon^2 \Delta t)  \tr{w_0} \bar{S}_1^{(2)} w_0 + O(\epsilon^3)$, which shows the recursion is valid for $k=2$ given it is so for $k=1$. \\
 Suppose that it is true for $k$. Then:
 $
 \delta x_{k+1} = \bar{A}_k \delta x_k + \bar{S}_k(\delta x_k) + \epsilon w_k \sqrt{\Delta t}
 = \bar{A}_k(\delta x_k^l + e_k) + \bar{S}_k(\delta x_k^l + e_k) + \epsilon w _k \sqrt{\Delta t} 
 = \underbrace{(\bar{A}_k \delta x_k^l + \epsilon w_k \sqrt{\Delta t})}_{\delta x_{k+1}^l}+ \underbrace{\bar{A}_k e_k^{(2)} + \tr{\delta x_k^{l}}\bar{S}_k^{(2)}\delta x_k^l}_{e_{k+1}^{(2)}}$
 $ + O(\epsilon^3),$ where the last line follows because $e_k = e_k^{(2)} + O(\epsilon^3)$, and $\bar{S}_k^{(2)}$ is the second order term of $\bar{S}_k(\cdot)$. This completes the induction and the proof.
\end{proof}

\textbf{Proof of Lemma 2.}\\
\begin{proof}
We have that:
$
\mathcal{J}^{\pi} = \sum_k \bar{c}_k + \sum_k \bar{C}_k (\delta x_k^l + e_k) + \sum_k \bar{H}_k(\delta x_k^l + e_k), 
= \sum_k \bar{c}_k + \sum_k \bar{C}_k \delta x_k^l + \sum_k \delta x_k^{l'}\bar{H}_k^{(2)} \delta x_k^l + \bar{C}_k e_k^{(2)} + O(\epsilon^3)$, where the last line of the equation above follows from an application of Lemma \ref{L1}. 
\end{proof}

\textbf{Proof of Proposition \ref{prop1}.}\\
In order to prove this result, we first need the following preparatory result. Consider the following deterministic continuous time system:
\begin{align*}
    J^{\pi}(0,x_0) &= \int_0^T \underbrace{c(x_t, \pi_t(x_t)}_{\bar{c}(t,x_t)} dt + c_T(x_T), \\
    \dot{x} &= \underbrace{f(x) + g(x)\pi_t(x)}_{\bar{f}(t,x)} + \epsilon v ,
\end{align*}
where $v(t)$ is a given continuous time input.
We rewrite the above policy evaluation equation in state-space form as follows:
$
    \dot{x} = \bar{f}(t,x) + \epsilon v, \;
    \dot{R} = \bar{c}(t,x),\;
    \dot{t} = 1,\\
    Z(t) = R(t) + c_T(x),
$
where the above equations can now be expressed in a time-invariant state space form as: $\dot{X} = F(X) + \epsilon G v$, and $Z(t) = H(X(t))$, where $X = [x, R, t]'$, $F = [\bar{f}(t,x), \bar{c}(t,x), 1]'$, $G = [I_n, 0, 0]'$ and $H (X) = R + c_T(x)$.\\
Given that the component functions $f(\cdot), ~g(\cdot), ~c(\cdot,\cdot),$ $ ~c_T(\cdot), ~\pi_t(\cdot)$ are five times continuously differentiable ($\mathcal{C}^5$) in their arguments (assumption A\ref{assump:2}), the output $Z(T)= J^{\pi}(0,x_0)$ can be expressed in terms of the inputs $v(t)$ as the unique Volterra series (Theorem 2.5 in \cite{Volterra_Krener}) where we have suppressed the dependence on $\pi$ for notational convenience:
\begin{align}
    &Z(T) = J^{(0)}(x_0) + \epsilon \int_0^T J^{(1)}(T,s) v(s) ds \nonumber\\
    & + \epsilon^2 \int_0^T \int_0^{s_1} J^{(2)}(T,s_1,s_2) v(s_1)v(s_2) ds_2ds_1 +\nonumber\\
    & \epsilon^3 \int_0^T \int_0^{s_1} \int_0^{s_2} J^{(3)}(T,s_1,s_2,s_3) v(s_3)v(s_2)v(s_1) ds_3 ds_2 ds_1 \nonumber\\
    &+ \epsilon^4 \int_0^T \int_0^{s_1} \int_0^{s_2} \int_0^{s_3} J^{(4)}(T,s_1,s_2,s_3,s_4) [v(s_4) v(s_3) \nonumber\\
    & \quad \quad v(s_2)v(s_1)] ds_3 ds_2 ds_1 + \mathcal{G}, \label{V_kernels}
\end{align}
where the Volterra kernels $J^{(k)}(.)$ are unique and continuous in 
their arguments, and $\mathcal{G}$ is an $o(\epsilon^4)$ function.

\begin{proof}
We show the result for a scalar input, the generalization to a vector input is straightforward. We first write the sample path cost in an input-output fashion in the discrete time case. Let $v(t)$ be a given input sequence, and given a discretization time $\Delta t$ such that $N = T/\Delta t$, let $v_k = v(k\Delta t)$, $k=0,1,2\cdots N-1,$ denote a piecewise constant approximation of the input. Under A\ref{assump:2}, the cost of any sample path from a given initial state $x_0$ can be expanded as follows in discrete time (where we have suppressed the explicit dependence of the different terms on $x_0$ for simplifying notation):
$
\mathcal{V}^{\pi}_N = \mathcal{V}^{\pi,0}_N + \epsilon \mathcal{V}^{\pi,1}_N+ \epsilon^2 \mathcal{V}^{\pi,2}_N + \epsilon^3 \mathcal{V}^{\pi,3}_N + \epsilon^4 \mathcal{V}^{\pi,4}_N + \mathcal{G}_N^{\pi}, 
$ where $\mathcal{V}^{\pi,0}_N$ represents the nominal/ zero input cost and
\begin{align}
&\mathcal{V}^{\pi,1}_N = \sum_{s=0}^{N-1} \mathcal{J}^{(1)}_N(N\Delta t,s\Delta t) v_s {\Delta t}, \nonumber\\
&\mathcal{V}^{\pi,2}_N = \sum_{s_1=0}^{N-1} \sum_{s_2 = 0}^{s_1} \mathcal{J}^{(2)}_N(N\Delta t, s_1 \Delta t, s_2 \Delta t) v_{s_2} v_{s_1} \Delta t^2, \nonumber\\
&\mathcal{V}^{\pi,3}_N = \sum_{s_1=0}^{N-1} \sum_{s_2 = 0}^{s_1} \sum_{s_3=0}^{s_2} \mathcal{J}^{(3)}_N(N\Delta t, s_1 \Delta t, s_2 \Delta t,s_3\Delta t)\nonumber \\
& \quad \quad \times v_{s_3}v_{s_2} v_{s_1} \Delta t^{3}, \nonumber\\
&\mathcal{V}^{\pi,4}_N = \sum_{s_1=0}^{N-1} \sum_{s_2 = 0}^{s_1} \sum_{s_3=0}^{s_2} \sum_{s_4 =0}^{s_3} \mathcal{J}^{(4)}_N(N\Delta t, s_1 \Delta t, s_2 \Delta t,s_3\Delta t,s_4\Delta t) \nonumber \\
& \quad \quad \times v_{s_4} v_{s_3}v_{s_2} v_{s_1} \Delta t^{4}, \nonumber
\end{align}
where $\mathcal{J}^{(k)}(\cdot)$ represent the piecewise constant discretized kernels corresponding to the Volterra kernels defined in \eqref{V_kernels}.
Further, the remainder function $\mathcal{G}_N^{\pi}$ is an $o(\epsilon^4)$ function. \\
Let $V^{\pi}(x_0)$ denote the cost of the trajectory under the continuous time input $v(t)$. Then it follows that $\mathcal{V}^{\pi}_N (x_0) \rightarrow V^{\pi} (x_0)$ as $N \rightarrow \infty$, regardless of the input sequence $v(t)$. Therefore, it follows that the discretized piecewise constant kernels $\mathcal{J}^{(k)}_N \rightarrow J^{(k)}$ in the $L_1$ sense as $N\rightarrow \infty$. \\
If the inputs were a discretized Wiener sequence $\omega(k\Delta t) = w_k \sqrt{\Delta t}$, where $w_k$ is a Gaussian white noise sequence, we can write the cost of a sample path as:
$\mathcal{J}^{\pi}_N = \mathcal{J}^{\pi,0}_N + \epsilon \mathcal{J}^{\pi,1}_N+ \epsilon^2 \mathcal{J}^{\pi,2}_N + \epsilon^3 \mathcal{J}^{\pi,3}_N + \epsilon^4 \mathcal{J}^{\pi,4}_N + \mathcal{R}_N^{\pi}$,
where $\mathcal{J}^{\pi,0}_N$ is the zero noise cost and
\begin{align}
&\mathcal{J}^{\pi,1}_N = \sum_{s=0}^{N-1} \mathcal{J}^{(1)}_N(N\Delta t,s\Delta t) w_s {\sqrt{\Delta t}}, \nonumber\\
&\mathcal{J}^{\pi,2}_N = \sum_{s_1=0}^{N-1} \sum_{s_2 = 0}^{s_1} \mathcal{J}^{(2)}_N(N\Delta t, s_1 \Delta t, s_2 \Delta t) w_{s_2} w_{s_1} \Delta t, \nonumber\\
&\mathcal{J}^{\pi,3}_N = \sum_{s_1=0}^{N-1} \sum_{s_2 = 0}^{s_1} \sum_{s_3=0}^{s_2}\Big( \mathcal{J}^{(3)}_N(N\Delta t, s_1 \Delta t, s_2 \Delta t,s_3\Delta t) \nonumber \\ 
& \quad \quad w_{s_3}w_{s_2} w_{s_1} (\Delta t)^{3/2}\Big), \nonumber
\end{align}
\begin{align}
&\mathcal{J}^{\pi,4}_N = \sum_{s_1=0}^{N-1} \sum_{s_2 = 0}^{s_1} \sum_{s_3=0}^{s_2} \sum_{s_4 =0}^{s_3}\Big( \mathcal{J}^{(4)}_N(N\Delta t, s_1 \Delta t, s_2 \Delta t,s_3\Delta t,\nonumber \\ 
&\quad \quad s_4\Delta t)  w_{s_4}w_{s_3}w_{s_2} w_{s_1} (\Delta t)^{2} \Big), \nonumber
\end{align}
Moreover, due to the whiteness of the noise sequence $\{w_k\}$, it follows that $E[\mathcal{J}^{\pi,1}_N] = 0$, and $E[\mathcal{J}^{\pi,3}_N] = 0$, since these terms are made of odd valued products of the noise sequences, while $E[\mathcal{J}^{\pi,2}_N], E[\mathcal{J}^{\pi,4}_N]$ are both finite owing to the finiteness of the moments of the noise values. 
Next as we take the limit of the terms above as $N \rightarrow \infty$, we obtain:
\begin{align}
    \lim_{N \rightarrow \infty} E[\mathcal{J}^{\pi,2}_N] = \int_0^T J^{(2)}(T,t,t) dt \equiv J^{\pi,1} < \infty, \nonumber\\
    \lim_{N \rightarrow \infty} E[\mathcal{J}^{\pi,4}_N] = \int_0^T \int_0^ t J^{(4)}(T,t,t,\tau,\tau) d\tau dt \equiv J^{\pi,2} < \infty, \nonumber
\end{align}
where the first equality above follows from the convergence of the discretized kernels $\mathcal{J}^{(k)}_N \rightarrow J^{(k)}$ for $k = 2,4$, while the integrals are finite owing to the continuity of the functions $J^{(2)}$ and $J^{(4)}$ as established in \eqref{V_kernels}. 
Further $\lim_{\epsilon \rightarrow 0} \epsilon^{-4} \lim_{ N \rightarrow \infty} E[\mathcal{R}_N^{\pi}] = \lim_{N \rightarrow \infty}E[\lim_{\epsilon \rightarrow 0} \epsilon^{-4} \mathcal{R}_N^{\pi}] = 0$, i.e., $\lim_{N \rightarrow \infty}E[\mathcal{R}_N^{\pi}]$ is $o(\epsilon^4)$. 
Therefore, taking expectations on both sides, we obtain:
$
\lim_{N \rightarrow \infty} E[\mathcal{J}^{\pi}_N] = {J}^{\pi,0} + \epsilon^2 {J}^{\pi,1} + \epsilon^4 {J}^{\pi,2} + o(\epsilon^4), 
$
where $J^{\pi,0} = \lim_{N\rightarrow \infty} \mathcal{J}^{\pi,0}_N$,
which proves the first part of the result.\\
Next, from Lemma \ref{L2}, as we take the limit $\Delta t \rightarrow 0$, it is clear that ${J}^{\pi,0}$ stems solely from the continuous-time nominal trajectory, and that ${J}^{\pi,1}$ is dependent on the continuous-time nominal and the linear closed-loop feedback. Therefore, the result follows.
\end{proof} 

\textbf{Proof of Proposition \ref{prop_e4}.}\\
\begin{proof} 
\nxx{
Using Proposition \ref{prop1}, we know that any cost function, and hence, the optimal cost-to-go function $J(t,x)$  can be expanded as:
\begin{equation} \label{f1}
J = J^0 + \epsilon^2 J^1 + \epsilon^4 J^2 + \cdots.
\end{equation}
Consider the HJB in Eq.~\eqref{DP_C} and substitute the minimizing control  $u= -\inv{R}\tr{g(x)} J^x$ (Eq.~\eqref{SOP}). This gives the PDE
\begin{align}\label{f2}
    -\frac{\partial J}{\partial t} =& \bar{l} + \frac{1}{2} \tr{(J^x)}\bar{g}\inv{R} \tr{\bar{g}} J^x + \tr{(J^x)}(\bar{f} - \bar{g}\inv{R} \tr{\bar{g}} J^x) \nonumber \\
    &+ \frac{\epsilon^2}{2}tr(J^{xx}),
\end{align}
with terminal condition $J(T,x)=c_T(x).$
 Also, $\bar{l} = l(x)$, $\bar{f} = f(x) $, $\bar{g} = g(x)$ and $tr()$ is the trace operator. Substituting Eq. \eqref{f1} into Eq. \eqref{f2} we obtain that:
\begin{align} 
&(-\frac{\partial J^0}{\partial t} - \epsilon^2 \frac{\partial J^1}{\partial t} - \epsilon^4 \frac{\partial J^2}{\partial t}+\cdots) = \bar{l} + 
\nonumber \\
&\frac{1}{2} \tr{(J^{0,x} + \epsilon^2 J^{1,x} + \cdots)} \bar{g} \inv{R} \tr{\bar{g}}(J^{0,x} + \epsilon^2 J^{1,x} + \cdots) \nonumber \\
&+\tr{(J^{0,x}+ \epsilon^2 J^{1,x}+\cdots)}\Big(\bar{f} -\bar{g} \inv{R} \tr{\bar{g}}(J^{0,x} + \nonumber\\
&\epsilon^2 J^{1,x} + \cdots)\Big) + \frac{\epsilon^2}{2} tr(J^{0,xx} + \epsilon^2 J^{1,xx}+\cdots). \label{f3}
\end{align}
Now, we equate the $\epsilon^0$, $\epsilon^2$ terms on both sides to obtain perturbation equations for the cost functions $J^0, J^1, J^2 \cdots$. \\
First, let us consider the $\epsilon^0$ term. Utilizing Eq. \eqref{f3} above, we obtain:
\begin{align}\label{f4}
-\frac{\partial J^0}{\partial t}  &= \bar{l} + \frac{1}{2} \tr{(J^{0,x})}\bar{g} \inv{R} \tr{\bar{g}}(J^{0,x}) \nonumber \\
& + \tr{(J^{0,x})} \underbrace{(\bar{f} - \bar{g} \inv{R} \tr{\bar{g}}J^{0,x})}_{\bar{f}^0},
\end{align}
with the terminal condition $J^0(T,x) = c_T(x)$.\\
Similarly, one can obtain the $J^1$ equations by equating the $O(\epsilon^2)$ terms in Eq. \eqref{f3}, which after regrouping and cancelling some of the terms yields:
\begin{align} \label{f5}
-\frac{\partial J^1}{\partial t} = \tr{(J^{1,x})} \underbrace{(\bar{f} - \bar{g}\inv{R} \tr{\bar{g}} J^{0,x})}_{= \bar{f}^0} + \frac{1}{2} tr(J^{0,xx}),
\end{align}
with terminal boundary condition $J^1(T,x) = 0$.
Note the perturbation structure of Eqs. \eqref{f4} and \eqref{f5}, $J^0(t,x)$ can be solved without knowledge of $J^1(t,x), J^2(t,x)$ etc., while $J^1(t,x)$ requires knowledge only of $J^0(t,x)$, and so on. In other words, the equations can be solved sequentially rather than simultaneously.

Now, let us consider the deterministic HJB equation in Eq.~\eqref{detDP}. Recall, $\phi(t,x)$ represents the optimal cost-to-go of the deterministic problem, and $u^d = -\inv{R}\tr{\bar{g}} \phi^x$ is the deterministic policy, analogous to the stochastic case. Substituting $u^d$ in Eq.~\eqref{detDP} gives
\begin{equation}\label{f8}
-\frac{\partial \phi}{\partial t}  = \bar{l} + \frac{1}{2}\tr{(\phi^x)} \bar{g} \inv{R} \tr{\bar{g}}\phi^x + \tr{(\phi^{x})}(\bar{f} - \bar{g}\inv{R}\tr{\bar{g}} \phi^x),
\end{equation}
with terminal condition $\phi(T,x) = c_T(x).$ 

Next, let $\varphi(t,x)$ denote the cost-to-go of the deterministic policy $u^d(\cdot)$ \textit{when applied to the stochastic system, i.e., Eq. \eqref{eq:model} with $\epsilon >0$}. Then, the cost-to-go of the deterministic policy, when applied to the stochastic system, satisfies:
\begin{align}\label{eq:varphi_pde}
    -\frac{\partial \varphi}{\partial t} =& \bar{l} + \frac{1}{2} \tr{(\phi^x)}\bar{g}\inv{R} \tr{\bar{g}} \phi^x + \tr{(\varphi^x)}(\bar{f} - \bar{g}\inv{R} \tr{\bar{g}} \phi^x) \nonumber \\
    &+ \frac{\epsilon^2}{2}tr(\varphi^{xx}),
\end{align}
with terminal condition $\varphi(T,x) = c_T(x)$. From Proposition~\ref{prop1}, we know $\varphi = \varphi^0 + \epsilon^2 \varphi^1 + \epsilon^4 \varphi^2 + \cdots$. Substituting this in Eq.~\eqref{eq:varphi_pde} gives
\begin{align}
&-\frac{\partial \varphi^0}{\partial t} - \epsilon^2\frac{\partial \varphi^1}{\partial t} - \epsilon^4 \frac{\partial\varphi^2}{\partial t}+ \cdots 
= \bar{l} + \frac{1}{2}\tr{(\phi^x)} \bar{g} \inv{R} \tr{\bar{g}}\phi^x \nonumber\\ & + \tr{(\varphi^{0,x} + \epsilon^2\varphi^{1,x} + \cdots )}\Big(\bar{f}  -\bar{g}\inv{R} \tr{\bar{g}} \phi^x\Big) \nonumber\\
& + \frac{\epsilon^2}{2} tr(\varphi^{0,xx} + \epsilon^2\varphi^{1,xx} + \cdots ). \label{f6}
\end{align}
As before, if we gather the terms for $\epsilon^0$, $\epsilon^2$, etc., on both sides of the above equation, we shall get the equations governing $\varphi^0, \varphi^1$, etc.  First, looking at the $\epsilon^0$ term in Eq. \eqref{f6}, we obtain:
\begin{equation} \label{f7}
-\frac{\partial \varphi^0}{\partial t} = \bar{l} + \frac{1}{2}\tr{(\phi^x)} \bar{g} \inv{R} \tr{\bar{g}}\phi^x + \tr{(\varphi^{0,x})}(\bar{f} - \bar{g}\inv{R}\tr{\bar{g}} \phi^x),    
\end{equation}
with the terminal condition $\varphi^0(T,x) = c_T(x)$. 

Comparing Eqs. \eqref{f7} and \eqref{f8}, it follows that $\phi(t,x) = \varphi^0(t,x)$ for all $(t,x)$. Further, comparing them to Eq. \eqref{f4}, it follows that $\varphi^0(t,x) = J^0(t,x)$, for all $(t,x)$. Also, note that the closed-loop system above, $\bar{f} - \bar{g} \inv{R} \tr{\bar{g}} \phi^x = \bar{f}^0$ (see Eq. \eqref{f4} and \eqref{f5}). 

Next, consider the $\epsilon^2$ terms in Eq. \eqref{f6}. We obtain:
\begin{align} \label{f9}
-\frac{\partial \varphi^1}{\partial t} = \tr{(\varphi^{1,x})} \underbrace{(\bar{f} - \bar{g}\inv{R}\tr{\bar{g}} \phi^x)}_{\bar{f}^0} + \frac{1}{2}tr(\varphi^{0,xx}),
\end{align}
with terminal condition $\varphi^1(T,x) = 0$. Again, comparing Eq. \eqref{f9} to Eq. \eqref{f5}, and noting that $\varphi^0 = J^0$,  it follows that $\varphi^1(t,x) = J^1(t,x)$, for all $(t,x)$. This completes the proof of the result.}  
\end{proof}
The result above has used the fact that the noise sequence $w_t$ is white. However, this is not necessary to show that $J^0(t,x) = \varphi^{0}(t,x)$ for all $(t,x)$. 

\textbf{Proof of Proposition \ref{OL_optimality}.}\\
\begin{proof}
    We show the scalar case, the vector case is a straightforward extension. From the existence of a smooth solution to the deterministic HJB, it follows that the Lagrange-Charpit characteristic ODEs $\dot{x} = F_q $, $\dot{q} = -F_x - qF_J$, have unique solutions in the interval $[0,T]$ from the terminal values $(x_T, q_T = c_T^x(x_T))$. Note that $q_T$ is a function of $x_T$, i.e., the terminal co-state $q_T$ is uniquely determined by the terminal state $x_T$.\\
    The family of characteristic curves originating from the terminal points $(x_T,q_T)$ on the boundary form the general solution to the HJB PDE. Since the HJB has a smooth solution (Assumption~\ref{assump:3}), the optimal cost-to-go function $J(t, x)$ is differentiable. This implies that the characteristic curves cannot intersect, i.e., there cannot be two different terminal conditions $(x_T, q_T)$ and $(x_T',q_T')$, such that the characteristic curves emanating from these points intersect at some state $x_t$ at some time $t$. To show this, suppose that such terminal conditions exist and the state/ co-state pair at time $t$ are $(x_t, q_t)$ and $(x_t,q_t')$ respectively. Note that $q_t \neq q_t'$ owing to the uniqueness of the solution of the characteristic ODEs. However, this implies that $\frac{\partial J}{\partial x}$, since it has two distinct values $q_t$ and $q_t'$, is not defined at state $x_t$ at time $t$ which violates the assumption that there exists a smooth solution to the HJB. This also implies that if a smooth solution to the HJB exists, it is unique, and is given by the characteristic ODE solutions from the terminal conditions defined above.\\
    Next, it can be seen from inspection of the Lagrange-Charpit equations \eqref{L-C-1}, \eqref{L-C-2}, that the characteristic curve that passes through the initial state $x_0$ satisfies Pontryagin's Minimum Principle (PMP) for the open-loop problem optimal control problem with initial condition $x_0$, and owing to the development above, there is a unique costate $q_0 = \frac{\partial J}{\partial x}|_{x_0}$ corresponding to the initial state $x_0$. Furthermore, since the characteristic curve passing through $x_0$ satisfies the PMP, owing to its uniqueness, it is also the global optimum solution, which establishes the fact that if the HJB admits a smooth solution, the PMP is also sufficient.
\end{proof}

\textbf{Proof of Proposition \ref{T-PFC}.}\\
\begin{proof}
Let the system model be given as
$
    \dot{x} = f(x) + g(x) u
$
where, the system matrices, its Jacobians, and Hessians are defined as in Definition~\ref{def:system_vec}.\\
Using indicial notation, the Lagrange-Charpit equations are (the subscript $t$ is ignored for the sake of simplicity):
\begin{align}
    \dot{x}_i &= f_i(x) - \Gamma_i^j \inv{R}_{jm} \Gamma_m^n q_n,\\
    \dot{q}_i &= -l_i^x - f_{ij}^x q_j + q_n \Gamma_m^n \inv{R}_{lm} \Gamma^{l,x}_{ik} q_k \label{Eq:qdot1}.
\end{align}
Performing a perturbation expansion of $\dot{x}$ around a nominal trajectory $\bar{x}$ gives 
\begin{multline}\label{Eq:deltax_dot}
    \delta \dot{x}_i = (f_{ij}^x - \Gamma_{ij}^{k,x} \inv{R}_{km} \Gamma_m^n q_n - \Gamma_k^i \inv{R}_{km} \Gamma_{nj}^{m,x} q_n) \delta x_j + \frac{1}{2}(f_{ijk}^{xx} -\\
     \inv{R}_{lm} \Gamma_m^n q_n \Gamma_{ikj}^{l,xx} - \Gamma_{ik}^{l,x} \inv{R}_{lm} \Gamma_{nj}^{m,x} q_n - \Gamma_i^l \inv{R}_{lm} \Gamma_{nkj}^{m,xx} q_n) \delta x_k \delta x_j\\ 
      - \Gamma_i^j \inv{R}_{jm} \Gamma_{m}^{n} \delta q_n  + \tilde{H}(\delta x^3) + \tilde{S}(\delta q^2). 
\end{multline}
Expanding the co-states about the nominal gives
\begin{align}
    q_i &= G_i + P_{ij} \delta x_j + H(\delta x^2) \label{Eq:q}, \\
    \delta q_i &= P_{ij}  \delta x_j + H(\delta x^2) \label{Eq:deltaq}.
\end{align}
Substituting Eq.~\eqref{Eq:q} and \eqref{Eq:deltaq}  in Eq.~\eqref{Eq:deltax_dot}, we get 
\begin{align}
    \delta \dot{x}_i &= (\bar{f}_{ij}^x - \bar{\Gamma}_{ij}^{k,x} \inv{R}_{km} \bar{\Gamma}_m^n q_n - \bar{\Gamma}_k^i \inv{R}_{km} \bar{\Gamma}_{nj}^{m,x} q_n \nonumber\\ 
    &- \bar{\Gamma}_i^l \inv{R}_{lm} \bar{\Gamma}_m^n P_{nj}) \delta x_j + H.O.T. \label{eq:deltax_dot1}
\end{align}
Let
$
    \mathcal{M}_{ij} = \bar{f}_{ij}^x - \bar{\Gamma}_{ij}^{k,x} \inv{R}_{km} \bar{\Gamma}_m^n q_n - \bar{\Gamma}_k^i \inv{R}_{km} \bar{\Gamma}_{nj}^{m,x} q_n  - \bar{\Gamma}_i^l \inv{R}_{lm} \bar{\Gamma}_m^n P_{nj}.
$
Differentiating Eq. \eqref{Eq:q} and using Eq. \eqref{eq:deltax_dot1}, we get
\begin{align}
    \dot{q}_i &= \dot{G}_i + P_{ij} \delta \dot{x}_j + \dot{P}_{ij} \delta x_j + \cdots ,\\
    \dot{q}_i &= \dot{G}_i + P_{ij}(\mathcal{M}_{jk} \delta x_k + \cdots) + \dot{P}_{ij} \delta x_j + \cdots. \label{eq:qdot cost}
\end{align}
Expanding Eq. \eqref{Eq:qdot1} upto 1st order about a nominal trajectory and substituting Eq. \eqref{Eq:q},
\begin{align}
    \dot{q}_i &= -(\bar{l}_i^x - l_{ij}^{xx} \delta{x}_j + \cdots) - \bar{f}_{ij}^x (G_j + P_{jk} \delta x_k + \cdots) \nonumber \\
    & - \delta x_m \bar{f}_{ijm}^{xx} (G_j + P_{ij} \delta x_k + \cdots)+ (G_n + P_{nk} \delta x_k + \cdots) \nonumber \\
    &\times (\bar{\Gamma}_m^n \inv{R}_{lm} \bar{\Gamma}_{ip}^{l,x}  + \bar{\Gamma}_{mj}^{n,x} \delta x_j \inv{R}_{lm} \bar{\Gamma}_{ip}^{l,x} + \bar{\Gamma}_m^n \inv{R}_{lm} \bar{\Gamma}_{ipj}^{l,xx} \delta x_j + \cdots) \nonumber \\
    &\times (G_p + P_{pr} \delta x_r + \cdots). \label{eq:qdot expansion}
\end{align}
Comparing the terms up to 1st order in $\delta x$ in Eq. \eqref{eq:qdot cost} and Eq. \eqref{eq:qdot expansion} with appropriate change in indices, we get
\begin{align}
    &\dot{G}_i = - \bar{l}_i^x - \bar{f}_{ij}^x G_j + G_n \bar{\Gamma}_m^n \inv{R}_{lm} \bar{\Gamma}_{ip}^{l,x} G_p, \label{eq:T-PFC-G index}\\ 
    &\dot{P}_{ij} = - P_{ik}(\bar{f}_{kj}^x - \bar{\Gamma}_{kj}^{l,x} \inv{R}_{lm} \bar{\Gamma}_m^n G_n) -
    (\bar{f}_{ik}^x - G_n \bar{\Gamma}_{m}^{n} \inv{R}_{lm} \bar{\Gamma}_{ik}^{l,x})P_{kj} \nonumber \\
    &- l_{ij}^{xx} - (\bar{f}_{ipj}^{xx} - G_n \bar{\Gamma}_{m}^{n} \inv{R}_{lm} \bar{\Gamma}_{ipj}^{l,xx})G_p + P_{ik} \bar{\Gamma}_{l}^{k} \inv{R}_{lm} \bar{\Gamma}_{nj}^{m,x} G_n\nonumber\\
    & + P_{ik} \bar{\Gamma}_{k}^{l} \inv{R}_{lm} \bar{\Gamma}_{m}^{n} P_{nj} + P_{nj} \bar{\Gamma}_{m}^{n} \inv{R}_{lm} \bar{\Gamma}_{ip}^{l,x} G_p \nonumber \\ 
    &+ G_n \bar{\Gamma}_{mj}^{n,x}\inv{R}_{lm}\bar{\Gamma}_{ip}^{l,x} G_p. \label{eq:T-PFC-P index}
\end{align}
Substituting $\bar{u}_l = - \inv{R}_{lm} \bar{\Gamma}_m^n G_n$ and changing indices to group terms,
Eq.~\eqref{eq:T-PFC-G index} can be written as 
$
\dot{G}_i = - \bar{l}_i^x - (\bar{f}_{ij}^x + \bar{u}_l \bar{\Gamma}_{ij}^{l,x}) G_j,
$
whose vector form is Eq.~\eqref{T-PFC-G}. Similarly, $\bar{u}_l$ can be substituted in Eq.~\eqref{eq:T-PFC-P index} and can be written as 
\begin{align}
    \dot{P}_{ij} &= - P_{ik}(\bar{f}_{kj}^x + \bar{\Gamma}_{kj}^{l,x} \bar{u}_l) -
    (\bar{f}_{ik}^x + \bar{u}_l \bar{\Gamma}_{ik}^{l,x})P_{kj} - l_{ij}^{xx} \nonumber \\
    &- (\bar{f}_{ipj}^{xx} - \bar{u}_l \bar{\Gamma}_{ipj}^{l,xx})G_p + K_{li}R_{lm}K_{mj}, \label{eq:T-PFC-P index1} \\
   \text{where, } K_{ij} &= - \inv{R}_{im}(\bar{\Gamma}_{k}^{m}P_{kj} +  \bar{\Gamma}_{kj}^{m,x}G_k). \label{eq:T-PFC-K index}
\end{align}
Eq.~\eqref{T-PFC-P} and Eq.~\eqref{T-PFC-K} are the vector form of Eq.~\eqref{eq:T-PFC-P index1} and Eq.~\eqref{eq:T-PFC-K index} respectively. 
\end{proof}

\end{appendix}

\section{Acknowledgment}
This work was supported by the NSF under grants ECCS-1637889, CDSE 1802867, and the AFOSR DDIP program under grant FA9550-17-1-0068. The simulations were conducted with the advanced computing resources provided by Texas A\&M High Performance Research Computing. 

\printbibliography 

\end{document}